\documentclass[12pt]{article}

\usepackage[margin=1in]{geometry}
\usepackage{amsthm}
\usepackage{amssymb}
\usepackage{amsmath}
\usepackage{graphicx}
\usepackage{mathrsfs}
\usepackage{url}
\usepackage{color}
\usepackage{framed}
\usepackage[table]{xcolor}
\usepackage[ruled,vlined]{algorithm2e}
\usepackage{mathtools}

\newtheorem{theorem}{Theorem}

\newtheorem{lemma}[theorem]{Lemma}
\newtheorem{proposition}[theorem]{Proposition}

\theoremstyle{definition}

\newtheorem{definition}[theorem]{Definition}

\title{Derandomized compressed sensing\\with nonuniform guarantees for $\ell_1$ recovery}

\author{Charles Clum \and Dustin~G.~Mixon}
\date{}

\begin{document}
\maketitle

\begin{abstract}
We extend the techniques of H\"{u}gel, Rauhut and Strohmer~\cite{HugelRS:14} to show that for every $\delta\in(0,1]$, there exists an explicit random $m\times N$ partial Fourier matrix $A$ with $m=s\operatorname{polylog}(N/\epsilon)$ and entropy $s^\delta\operatorname{polylog}(N/\epsilon)$ such that for every $s$-sparse signal $x\in\mathbb{C}^N$, there exists an event of probability at least $1-\epsilon$ over which $x$ is the unique minimizer of $\|z\|_1$ subject to $Az=Ax$.
The bulk of our analysis uses tools from decoupling to estimate the extreme singular values of the submatrix of $A$ whose columns correspond to the support of $x$.
\end{abstract}

\section{Introduction}

A vector $x\in\mathbb{C}^N$ is said to be \textbf{$s$-sparse} if it has at most $s$ nonzero entries.
Natural images are well-approximated by sparse vectors in a wavelet domain, and this feature is exploited by JPEG2000 image compression~\cite{TaubmanM:12}.
In 2006, Cand\`{e}s, Romberg and Tao~\cite{CandesRT:06} and Donoho~\cite{Donoho:06} discovered that sparsity could also be exploited for \textbf{compressed sensing}.
One popular formulation of compressed sensing is to find a sensing matrix $A\in\mathbb{C}^{m\times N}$ such that every $s$-sparse vector $x\in\mathbb{C}^N$ with $s\leq m/\operatorname{polylog} N$ can be efficiently reconstructed from the linear data $y=Ax$ by solving the convex program
\begin{equation}
\label{eq.L1min}
\text{minimize}
\quad
\|z\|_1
\quad
\text{subject to}
\quad
Az=y.
\end{equation}
To enjoy this $\ell_1$-recovery property, it suffices for $A$ to act as a near-isometry over the set of $s$-sparse vectors~\cite{Candes:08,CaiZ:13}:
\begin{equation}
\label{eq.rip}
\frac{1}{2}\|x-y\|_2^2
\leq \|Ax-Ay\|_2^2
\leq\frac{3}{2}\|x-y\|_2^2
\qquad
\text{for every $s$-sparse $x,y \in \mathbb{C}^N$}.
\end{equation}
We refer to such $A$ as \textbf{$s$-restricted isometries}.
Equivalently, every submatrix $A_T$ that is comprised of $2s$ columns from $A$ has singular values $\sigma(A_T)\subseteq[\sqrt{1/2},\sqrt{3/2}]$.
Since random matrices exhibit predictable extreme singular values~\cite{Tao:12,Tropp:15,Vershynin:18}, it comes as no surprise that many distributions of random matrices $A\in\mathbb{C}^{m\times N}$ are known to be $s$-restricted isometries with high probability provided $m\geq s\operatorname{polylog}N$, e.g.,~\cite{CandesT:06,BaraniukDDW:08,MendelsonPT:09,KrahmerMR:14}.
Unfortunately, testing \eqref{eq.rip} is $\mathsf{NP}$-hard~\cite{BandeiraDMS:13,TillmannP:13}, and it is even hard for matrices with independent subgaussian entries, assuming the hardness of finding planted dense subgraphs~\cite{WangBP:16}.

In 2007, Tao~\cite{Tao:07} posed the problem of finding explicit $s$-restricted isometries $A\in\mathbb{C}^{m\times N}$ with $N^\epsilon\leq m\leq(1-\epsilon)N$ and $m=s\operatorname{polylog}N$.
One may view this as an instance of Avi Wigderson's \textit{hay in a haystack} problem~\cite{BandeiraFMM:16}.
To be clear, we say a sequence $\{A_N\}$ of $m(N)\times N$ matrices with $N\to\infty$ is \textbf{explicit} if there exists an algorithm that on input $N$ produces $A_N$ in time that is polynomial in $N$.
For example, we currently know of several explicit sequences of matrices $A$ with unit-norm columns $\{a_i\}_{i\in[N]}$ and minimum \textbf{coherence}:
\[
\max_{\substack{i,j\in[N]\\ i\neq j}}|\langle a_i,a_j\rangle|.
\]
See~\cite{FickusM:15} for a survey.
Since the columns of such matrices are nearly orthonormal, they are intuitively reasonable choices to ensure $\sigma(A_T)\subseteq[\sqrt{1/2},\sqrt{3/2}]$.
One may leverage the Gershgorin circle theorem to produce such an estimate~\cite{ApplebaumHSC:09,DeVore:07,BandeiraFMW:13}, but this will only guarantee \eqref{eq.rip} for $s\leq m^{1/2}/\operatorname{polylog}N$.
In fact, this estimate is essentially tight since there exist $m\times N$ matrices of minimum coherence with $\Theta(\sqrt{m})$ linearly dependent columns~\cite{FickusMT:12,JasperMF:13}.
As an alternative to Gershgorin, Bourgain et al.~\cite{BourgainDFKK:11,BourgainDFKK:11b} introduced the so-called \textit{flat RIP} estimate to demonstrate that certain explicit $m\times N$ matrices with $m=\Theta(N^{1-\epsilon})$ are $s$-restricted isometries for $s=O(m^{1/2+\epsilon})$, where $\epsilon=10^{-16}$; see~\cite{Mixon:15} for an expository treatment.
It was conjectured in~\cite{BandeiraFMW:13} that the Paley equiangular tight frames~\cite{Renes:07} are restricted isometries for even larger values of $\epsilon$, and the flat RIP estimate can be used to prove this, conditional on existing conjectures on cancellations in the Legendre symbol~\cite{BandeiraMM:17}.

While is it difficult to obtain explicit $s$-restricted isometries for $s=m/\operatorname{polylog}N$, there have been two approaches to make partial progress: \textit{random signals} and \textit{derandomized matrices}.
The random signals approach explains a certain observation:
While low-coherence $m\times N$ sensing matrices may not determine \textit{every} $s$-sparse signal with $s=m/\operatorname{polylog}N$, they do determine \textit{most} of these signals.
In fact, even for the $m\times N$ matrices $A$ of minimum coherence with $\Theta(\sqrt{m})$ linearly dependent columns (indexed by $T$, say), while $y=Ax$ fails to uniquely determine any $x$ with support containing $T$, it empirically holds that random $s$-sparse vectors $x$ can be reconstructed from $y=Ax$ by solving \eqref{eq.L1min}.
This behavior appears to exhibit a phase transition~\cite{MonajemiJGD:13}, and Tropp proved this behavior up to logarithmic factors in~\cite{Tropp:08}; see also the precise asymptotic estimates conjectured by Haikin, Zamir and Gavish~\cite{HaikinZG:17} and recent progress in~\cite{MagsinoMP:19}.

As another approach, one may seek explicit random matrices that are $s$-restricted isometries for $s=m/\operatorname{polylog}N$ with high probability, but with as little entropy as possible; here, we say a sequence $\{A_N\}$ of $m(N)\times N$ random matrices is explicit if there exists an algorithm that on input $N$ produces $A_N$ in time that is polynomial in $N$, assuming access to a random variable that is uniformly distributed over $[k]:=\{1,\ldots,k\}$ for any desired $k\in\mathbb{N}$.
Given a discrete random variable $X$ that takes values in $\mathcal{X}$, the \textbf{entropy} $H(X)$ of $X$ is defined by
\[
H(X)
:=-\sum_{x\in\mathcal{X}}\mathbb{P}\{X=x\}\log_2\mathbb{P}\{X=x\}.
\]
For example, the uniform distribution over $[2^H]$ has entropy $H$, meaning it takes $H$ independent tosses of a fair coin to simulate this distribution.
One popular random matrix in compressed sensing draws independent entries uniformly over $\{\pm m^{-1/2}\}$~\cite{CandesT:06,BaraniukDDW:08,DuarteDTLSKB:08,MendelsonPT:09}, which has entropy $H=mN=sN\operatorname{polylog}N$.
One may use the Legendre symbol to derandomize this matrix to require only $H=s\operatorname{polylog}N$ random bits~\cite{BandeiraFMM:16}.
Alternatively, one may draw $m$ rows uniformly from the $N\times N$ discrete Fourier transform to get $H=m\log_2 N=s\operatorname{polylog}N$~\cite{CandesT:06,RudelsonV:08,CheraghchiGV:13,Bourgain:14,HavivR:17}.
Any choice of Johnson--Lindenstrauss projection~\cite{JohnsonL:86} with $m=s\operatorname{polylog}N$ is an $s$-restricted isometry with high probability~\cite{BaraniukDDW:08}, but these random matrices inherently require $H=\Omega(m)$~\cite{BandeiraFMM:16}.
To date, it is an open problem to find explicit random $m\times N$ matrices with $N^\epsilon\leq m\leq(1-\epsilon)N$ and $H\ll s$ that are $s$-restricted isometries for $s=m/\operatorname{polylog}N$ with high probability.

There is another meaningful way to treat the compressed sensing problem:
Show that the distribution of a given random $m\times N$ matrix $A$ has the property that for every $s$-sparse $x\in\mathbb{C}^N$, there exists a high-probability event $\mathcal{E}(x)$ over which $x$ can be reconstructed from $y=Ax$ by solving \eqref{eq.L1min}.
This \textbf{nonuniform} setting was originally studied by Cand\`{e}s, Romberg and Tao in~\cite{CandesRT:06}, and later used to define the Donoho--Tanner phase transition~\cite{DonohoT:09,AmelunxenLMT:14}.
For applications, the nonuniform setting assumes that a fresh copy of $A$ is drawn every time a signal $x$ is to be sensed as $y=Ax$, and then both $A$ and $y$ are passed to the optimizer to solve \eqref{eq.L1min}, which is guaranteed to recover $x$ at least $\inf_x\mathbb{P}(\mathcal{E}(x))$ of the time.
Notice that if an explicit random matrix is an $s$-restricted isometry in the high-probability event $\mathcal{E}$, then it already enjoys a such guarantee with $\mathcal{E}(x)=\mathcal{E}$ for every $s$-sparse $x\in\mathbb{C}^N$.
As such, it is natural to seek a nonuniform guarantee for an explicit random matrix with entropy $H\ll s$.

Let $G\colon\{0,1\}^*\to\{0,1\}^*$ denote a \textbf{pseudorandom number generator} with stretching parameter $L\in\mathbb{N}$, that is, a deterministic function that is computable in polynomial time such that (i) $G$ maps strings of length $n$ to strings of length $n^L$, and (ii) for every function $D\colon\{0,1\}^*\to\{0,1\}$ that is computable in probabilistic polynomial time and every $k\in\mathbb{N}$, there exists $n_0\in\mathbb{N}$ such that for every $n\geq n_0$, it holds that
\begin{equation*}
|\mathbb{P}\{D(G(V))=1\}-\mathbb{P}\{D(U)=1\}|< n^{-k},
\end{equation*}
where $U$ and $V$ are uniformly distributed over $\{0,1\}^{n^L}$ and $\{0,1\}^{n}$, respectively.
In words, $G$ stretches $n$ random bits into $n^L$ bits that are computationally indistinguishable from true randomness.
While pseudorandom number generators are not known to exist, their existence is a fundamental assumption in modern cryptography~\cite{Goldreich:01}.
Take any choice of explicit $m\times N$ random matrices $A(U)$ with $m=s\operatorname{polylog}N$ and entropy $s\operatorname{polylog}N$ that are $s$-restricted isometries with high probability, and consider the pseudorandom counterpart $A(G(V))$ with entropy $s^{1/L}\operatorname{polylog}N$.
If there were an $s$-sparse $x\in\mathbb{C}^N$ that failed to typically equal the unique minimizer of $\|z\|_1$ subject to $A(G(V))z=A(G(V))x$, then we could use this $x$ to detect the difference between $U$ and $G(V)$.
As such, we expect that for every $L\in\mathbb{N}$, there exist explicit $m\times N$ random matrices with $m=s\operatorname{polylog}N$ and entropy $s^{1/L}\operatorname{polylog}N$ that enjoy a nonuniform $\ell_1$-recovery guarantee.


In 2014, H\"{u}gel, Rauhut and Strohmer~\cite{HugelRS:14} applied tools from decoupling to show that certain random matrices that arise in remote sensing applications enjoy a nonuniform $\ell_1$-recovery guarantee.
One may directly apply their techniques to obtain an explicit random $m\times N$ partial Fourier matrix $A$ with entropy $\Theta(s^{1/2}\log N\log(N/\epsilon))$ such that each $s$-sparse vector $x\in\mathbb{C}^N$ can be recovered by $\ell_1$ minimization with probability at least $1-\epsilon$.
Explicitly, $A$ is the submatrix of the $N\times N$ discrete Fourier transform with rows indexed by $\{b_i+b_j:i,j\in[n]\}$, where $n=\Theta(s^{1/2}\log(N/\epsilon))$ and $b_1,\ldots, b_n$ are independent random variables with uniform distribution over $\mathbb{Z}/N\mathbb{Z}$.
In this paper, we generalize this construction to allow for row indices of the form $\{b_{i_1}+\cdots+b_{i_L}:i_1,\ldots,i_L\in[n]\}$.
Our main result, found in the next section, is that for each $L$, one may take $n=\Theta_L(s^{1/L}\log^4(N/\epsilon))$ to obtain a nonuniform $\ell_1$-recovery guarantee.
In a sense, this confirms our prediction from the previous paragraph, but it does not require the existence of pseudorandom number generators.
Our proof hinges on a different decoupling result (Proposition~\ref{decouplingtheorem}) that reduces our key spectral norm estimate to an iterative application of the matrix Bernstein inequality (Proposition~\ref{matrixbernstein}); see Section~3.
In Section~4, we provide a simplified treatment of the moment method used in~\cite{HugelRS:14} to obtain an approximate dual certificate, though generalized for our purposes.
Hopefully, similar ideas can be used to produce explicit random $m\times N$ matrices with $m=s\operatorname{polylog}N$ and entropy $H\ll s$ that are $s$-restricted isometries with high probability.
Also, we note that Iwen~\cite{Iwen:14} identified explicit random $m\times N$ matrices with $m=\Theta(s\log^2N)$ and entropy $H=\Theta(\log^2s)$ for which a specialized algorithm enjoys a nonuniform recovery guarantee, and it would be interesting if this level of derandomization could also be achieved with $\ell_1$ recovery.

\section{Main result}

Throughout, we take $e_N\colon\mathbb{R}\to\mathbb{C}$ defined by $e_N(x):=e^{2\pi i x/N}$.
We will use the following random matrix as a compressed sensing matrix:

\begin{definition}
Let $N$ be prime and put $m = n^{L}$ for some $n,L\in \mathbb{N}$.
Given independent, uniform random variables $b_{1},\ldots,b_{n}$ over $[N]$, then the \textbf{Minkowski partial Fourier matrix} is the random $m\times N$ matrix $A$ with rows indexed by $[n]^{L}$, and whose entry at $(\mathcal{I},j)=(i_{1},\ldots,i_{L},j)\in [n]^{L}\times [N]$ is given by $A_{\mathcal{I},j}:= n^{-L/2}\cdot e_N((b_{i_1} + \cdots + b_{i_L}) j)$.
\end{definition}

We take $N$ to be prime for convenience; we suspect that our results also hold when $N$ is not prime, but the proofs would be more complicated.
To perform compressed sensing, we sense with the random matrix $A$ to obtain data $y$ and then we solve the following program:
\begin{equation}
\label{noisyell1minimization}
\text{minimize}
\quad
\|z\|_1
\quad
\text{subject to}
\quad
\|Az-y\|_2\leq \eta.
\end{equation}
Our main result states that if $x\in\mathbb{C}^N$ is nearly sparse, $n=s^{1/L}\operatorname{polylog}(N/\epsilon)$, and the noisy data $y=Ax+e$ satisfies $\|e\|_2\leq\eta$, then the minimizer of \eqref{noisyell1minimization} is a good approximation of $x$ with probability at least $1-\epsilon$:

\begin{theorem}[main result]
\label{desiredderandomizationtheorem}
Fix $L\in\mathbb{N}$.
There exists $C_L>0$ depending only on $L$ such that the following holds.
Given any prime $N$, positive integer $s\leq N$, and $\epsilon\in(0,1)$, select any integer $n\geq C_{L}s^{1/L}\log^{4}(N/\epsilon)$.
Then the Minkowski partial Fourier matrix $A$ with parameters $(N,n,L)$ has the property that for any fixed signal $x\in \mathbb{C}^{N}$ and noise $e\in\mathbb{C}^{n^L}$ with $\|e\|_{2}\leq \eta$, then given random data $y = Ax + e$, the minimizer $\hat{x}$ of~\eqref{noisyell1minimization} satisfies the estimate
\[
\|\hat{x}-x\|_2
\leq 25\cdot(\sqrt{s}\cdot\eta + \|x-x_s\|_{1})
\]
with probability at least $1 - \epsilon$.  
\end{theorem}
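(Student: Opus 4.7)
The plan is to follow the standard inexact-dual-certificate route to nonuniform $\ell_1$ recovery, specialized to the Minkowski partial Fourier matrix. Let $S$ denote the support of the best $s$-term approximation $x_s$ and write $A_S$ for the submatrix of $A$ on columns $S$. A deterministic recovery lemma (in the Fuchs--Tropp style, as quantified by Cand\`{e}s--Plan and Foucart--Rauhut) reduces the target bound $\|\hat{x}-x\|_2 \leq 25(\sqrt{s}\,\eta + \|x-x_s\|_1)$ to the existence, on a single event of probability at least $1-\epsilon$, of two ingredients: (a) the conditioning estimate $\|A_S^* A_S - I_S\|_{\mathrm{op}} \leq 1/2$; and (b) an inexact dual certificate $u = A^* v$ satisfying $\|u_S - \mathrm{sgn}(x_S)\|_2 \leq 1/4$, $\|u_{S^c}\|_\infty \leq 1/4$, and $\|v\|_2 \lesssim \sqrt{s}$.

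For (a), I would expand $A_S^* A_S - I_S$ as a double sum over row multi-indices $\mathcal{I}, \mathcal{J} \in [n]^L$ and split off the diagonal contribution. The off-diagonal piece is multilinear in the coupled sequence $b_1,\ldots,b_n$, and Proposition~\ref{decouplingtheorem} allows us to replace this sequence with $L$ independent copies $b^{(1)},\ldots,b^{(L)}$ at the cost of an absolute constant. Conditioning on all but one copy at a time exhibits the conditional expression as a sum of independent random matrices, and Proposition~\ref{matrixbernstein} applied iteratively over the $L$ levels delivers the deviation bound. The threshold $n \geq C_L s^{1/L}\log^4(N/\epsilon)$ is calibrated so that each of the $L$ rounds contributes a factor $n^{-1}$ to the variance and the cumulative product matches $s/n^L$, with the four logs absorbing Bernstein tail factors across the $L$ iterations and a union bound over $S$.

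For (b), I would use the golfing scheme of Gross, partitioning the rows of $A$ into $K = O(\log(N/\epsilon))$ independent blocks $A^{(1)},\ldots,A^{(K)}$ and constructing $u = \sum_k (A^{(k)})^* v^{(k)}$ so that the residual $w^{(k)} = \mathrm{sgn}(x_S) - u_S^{(k)}$ contracts geometrically in $\ell_2$ on $S$ while the off-support magnitudes remain controlled in $\ell_\infty$. The $\ell_2$ contraction on $S$ reduces to a chunk-wise version of (a). The $\ell_\infty$ control on $S^c$ requires, for each fixed $w$ and each $j \in S^c$, a high-moment bound on $|\langle e_j, (A^{(k)})^* A^{(k)}_S w\rangle|$; this is the moment-method estimate in Section~4, which again unfolds as a multilinear character sum in the $b_i$'s and is treated by the same decoupling plus iterated matrix Bernstein pipeline as in (a), now with the coordinate $j$ frozen.

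The main obstacle I foresee is the combinatorial bookkeeping across the $L$ decoupling layers. When $A_S^* A_S$ is expanded in the Minkowski setting, pairs of row multi-indices $\mathcal{I}, \mathcal{J} \in [n]^L$ can coincide in many partial ways, each generating a different character sum whose variance contribution must be tracked so that no extra factor of $s$ leaks in at any stage. Proposition~\ref{decouplingtheorem} is precisely what tames this combinatorics by reducing to genuinely independent copies, and the iterative matrix Bernstein bound then cleanly converts the $L$-fold variance $s/n^L$ into a sub-Gaussian deviation of order $\operatorname{polylog}(N/\epsilon)$. Once (a) and (b) hold on a common event of probability at least $1-\epsilon$, the deterministic recovery lemma produces the stated $\ell_2$ error estimate and completes the proof.
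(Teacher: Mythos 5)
Your overall framework (deterministic recovery lemma plus conditioning plus dual certificate, with decoupling and iterated matrix Bernstein to handle the dependent rows) matches the paper's skeleton, and your plan for the conditioning step (a) is essentially the paper's Lemma~\ref{conditioningtheorem}: expand $A_T^*A_T-I$ over the multi-index set $[n]^L$, account for repeated indices via partitions, apply Proposition~\ref{decouplingtheorem} to pass to independent copies $b^{(1)},\ldots,b^{(L)}$, and then run Proposition~\ref{matrixbernstein} iteratively (Lemma~\ref{BernsteinLemma2}). Your intuition about the combinatorics of coincident multi-indices being the main bookkeeping obstacle is also right.

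Your step (b) is where the proposal diverges from the paper and where there is a genuine gap. You propose golfing: partitioning the rows of $A$ into $K=O(\log(N/\epsilon))$ \emph{independent} blocks $A^{(1)},\ldots,A^{(K)}$ and building $u$ iteratively with one block per round. But the rows of the Minkowski partial Fourier matrix are all functions of the \emph{same} random seed $b_1,\ldots,b_n$: the entry in row $(i_1,\ldots,i_L)$ depends on $b_{i_1}+\cdots+b_{i_L}$, and any two rows that share an index $i_\ell$ are statistically coupled. There is no partition of $[n]^L$ into blocks that are mutually independent (short of throwing away all cross-chunk tuples, which loses nearly all rows and breaks the identity $A^*A=\sum_k (A^{(k)})^*A^{(k)}$ that golfing leans on). This independence requirement is exactly what golfing exploits to make each round's residual measurable with respect to the past and independent of the next block, so the scheme does not apply to this matrix as defined. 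Replacing the blocks with fresh independent copies of $b_1,\ldots,b_n$ would change the measurement ensemble and would no longer prove the stated theorem.

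The paper sidesteps this by avoiding golfing entirely. It invokes Proposition~\ref{prop.approx dual cert} with an \emph{exact} sign-matching certificate $u_T=\operatorname{sgn}(x)_T$, constructed explicitly as $v=(A_T^*)^\dagger\operatorname{sgn}(x)_T$, so that the $\ell_2$ bound on $v$ follows directly from the conditioning event. The remaining $\|u_{T^c}\|_\infty\leq 1/2$ bound is obtained by expanding $(A_T^*A_T)^{-1}$ as a Neumann series $\sum_{k\geq0}(I-A_T^*A_T)^k$, truncating at $\omega=\Theta(\log N)$, controlling the truncated tail deterministically via the coherence bound of Lemma~\ref{lem.coherence}, and controlling each head term $a_u^*A_T(I-A_T^*A_T)^k\operatorname{sgn}(x)_T$ probabilistically via the $p$-th moment estimate of Lemma~\ref{lem.moment hell} (itself proved by a character-sum count, not by decoupling). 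Your proposal omits the coherence lemma and the Neumann-series split altogether, which are load-bearing in the paper's argument. To repair your proposal, replace the golfing step with the pseudoinverse certificate plus Neumann expansion, and add the coherence estimate for the tail.
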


To prove this result, we construct an approximate dual certificate that satisfies the hypotheses of the following proposition.
Here and throughout, $x_s$ denotes (any of) the best $s$-term approximation(s) of $x$.

\begin{proposition}[Theorem~3.1 in~\cite{HugelRS:14}, cf.\ Theorem~4.33 in~\cite{FoucartR:13}]
\label{prop.approx dual cert}
Fix a signal $x\in \mathbb{C}^{N}$ and a measurement matrix $A\in\mathbb{C}^{m\times N}$ with unit-norm columns.
Fix $s\in[N]$ and any $T\subseteq[N]$ of size $s$ such that $\|x-x_T\|_1=\|x-x_s\|_1$.
Suppose
\begin{equation}
\label{eq.conditioning}
\|A_T^*A_T-I\|_{2\to2}
\leq1/2
\end{equation}
and that there exists $v\in\mathbb{C}^m$ such that $u:=A^*v$ satisfies
\begin{equation}
\label{eq.dual cert}
u_T=\operatorname{sgn}(x)_T,
\qquad
\|u_{T^c}\|_\infty\leq 1/2,
\qquad
\|v\|_2\leq \sqrt{2s}.
\end{equation}
Then for every $e\in\mathbb{C}^{m}$ with $\|e\|_{2}\leq \eta$, given random data $y = Ax + e$, the minimizer $\hat{x}$ of~\eqref{noisyell1minimization} satisfies the estimate
\[
\|\hat{x}-x\|_2
\leq 25\cdot(\sqrt{s}\cdot\eta + \|x-x_s\|_{1}).
\]
\end{proposition}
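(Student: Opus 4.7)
The plan is to set $h := \hat{x}-x$ and control $\|h\|_2$ by separately bounding $\|h_T\|_2$ via the conditioning~\eqref{eq.conditioning} and $\|h_{T^c}\|_1$ via the approximate dual certificate~\eqref{eq.dual cert}, then combining through $\|h\|_2 \leq \|h_T\|_2 + \|h_{T^c}\|_2 \leq \|h_T\|_2 + \|h_{T^c}\|_1$.  Since $\|Ax - y\|_2 = \|e\|_2 \leq \eta$, the vector $x$ is feasible for~\eqref{noisyell1minimization}, so optimality of $\hat{x}$ forces $\|\hat{x}\|_1 \leq \|x\|_1$ while the triangle inequality forces $\|Ah\|_2 \leq 2\eta$; these two consequences drive everything that follows.

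For the off-support bound, I would split the inequality $\|x\|_1 \geq \|x+h\|_1$ into contributions from $T$ and $T^c$: on $T^c$ I use $|x_i + h_i| \geq |h_i| - |x_i|$, and on $T$ I use $|x_i + h_i| \geq |x_i| + \operatorname{Re}(\overline{\operatorname{sgn}(x_i)}\, h_i)$, extending $\operatorname{sgn}$ at zero to agree with $u_T$ so that $\operatorname{sgn}(x)_T = u_T$ holds on all of $T$.  Rearranging gives $\|h_{T^c}\|_1 \leq 2\|x_{T^c}\|_1 - \operatorname{Re}\langle u_T, h_T\rangle$.  Then the identity $\langle u_T, h_T\rangle = \langle u,h\rangle - \langle u_{T^c}, h_{T^c}\rangle = \langle v, Ah\rangle - \langle u_{T^c}, h_{T^c}\rangle$ together with $|\langle v, Ah\rangle| \leq \sqrt{2s}\cdot 2\eta$ (Cauchy--Schwarz, using $\|v\|_2\leq\sqrt{2s}$) and $|\langle u_{T^c}, h_{T^c}\rangle| \leq \tfrac{1}{2}\|h_{T^c}\|_1$ (H\"older, using $\|u_{T^c}\|_\infty\leq\tfrac{1}{2}$) lets me absorb a copy of $\|h_{T^c}\|_1$ on the left; using $\|x_{T^c}\|_1 = \|x-x_s\|_1$ yields $\|h_{T^c}\|_1 \leq 4\|x-x_s\|_1 + 4\sqrt{2s}\,\eta$.

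For the on-support bound, the conditioning~\eqref{eq.conditioning} gives $\|A_T h_T\|_2 \geq \tfrac{1}{\sqrt{2}}\|h_T\|_2$, while $Ah = A_T h_T + A_{T^c}h_{T^c}$ combined with the coarse bound $\|A_{T^c} h_{T^c}\|_2 \leq \|h_{T^c}\|_1$ (obtained from unit-norm columns via the triangle inequality) implies $\|h_T\|_2 \leq \sqrt{2}\,(2\eta + \|h_{T^c}\|_1)$.  Inserting both estimates into $\|h\|_2 \leq \|h_T\|_2 + \|h_{T^c}\|_1$ produces a bound of the shape $C_1\sqrt{s}\,\eta + C_2\|x-x_s\|_1$, and a direct numerical check shows the constants fit inside $25$ with comfortable slack.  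There is no real obstacle here: each step is a standard manipulation, and the only subtleties are the complex sign convention in the $T$-sum (handled by extending $\operatorname{sgn}$ at zero to match $u_T$) and the routine bookkeeping of constants.
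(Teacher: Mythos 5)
Your argument is correct: the paper does not prove this proposition itself but cites Theorem~3.1 of H\"ugel--Rauhut--Strohmer (cf.\ Theorem~4.33 of Foucart--Rauhut), and your proof is essentially that standard argument --- feasibility of $x$ gives $\|\hat{x}\|_1\leq\|x\|_1$ and $\|Ah\|_2\leq 2\eta$, the sign-splitting plus $\langle u,h\rangle=\langle v,Ah\rangle$ controls $\|h_{T^c}\|_1$, and the conditioning on $A_T$ controls $\|h_T\|_2$. The constants you obtain (roughly $14\sqrt{s}\,\eta+10\|x-x_s\|_1$) indeed sit comfortably inside the stated factor of $25$.
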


When applying Proposition~\ref{prop.approx dual cert}, we will make use of a few intermediate lemmas.
For example, the columns of $A$ have low coherence with high probability:

\begin{lemma}
\label{lem.coherence}
Fix $L\in\mathbb{N}$.
There exists $C^{(1)}_L>0$ depending only on $L$ such that the following holds.
Given any $N,n\in\mathbb{N}$, let $\{a_i\}_{i\in[N]}$ denote the column vectors of the Minkowski partial Fourier matrix $A$ with parameters $(N,n,L)$.
Then
\[
\max_{\substack{i,j\in[N]\\i\neq j}}|\langle a_i,a_j\rangle|
\leq C^{(1)}_L\cdot n^{-L/2} \cdot \log^{L/2}(N/\epsilon)
\]
with probability at least $1-\frac{\epsilon}{3}$.
\end{lemma}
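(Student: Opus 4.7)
The plan is to first compute the inner product $\langle a_i,a_j\rangle$ explicitly, observe that it factors as a product of independent simpler sums, and then apply a concentration inequality to each factor before union-bounding.

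A direct calculation gives
\[
\langle a_i,a_j\rangle
= n^{-L}\sum_{(i_1,\ldots,i_L)\in[n]^L} e_N\bigl((b_{i_1}+\cdots+b_{i_L})(j-i)\bigr)
= \prod_{\ell=1}^{L}\Biggl(\frac{1}{n}\sum_{k=1}^{n} e_N(b_k\,d)\Biggr),
\]
where $d:=j-i\pmod N$, so that $\langle a_i,a_j\rangle=S_d^{L}$ for $S_d:=\frac{1}{n}\sum_{k=1}^{n}e_N(b_k d)$. Crucially, because $N$ is prime and $d\not\equiv 0$, the map $b\mapsto bd\pmod N$ is a bijection of $[N]$, so each summand $e_N(b_kd)$ is uniformly distributed on the $N$-th roots of unity, hence mean zero, independent across $k$, and bounded in modulus by $1$.

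Next I would bound $|S_d|$ by applying a complex Hoeffding / Bernstein inequality (splitting into real and imaginary parts, say, each of which is a sum of independent bounded centered real random variables). This yields
\[
\mathbb{P}\bigl\{|S_d|\geq t\bigr\}
\leq 4\exp(-c n t^{2})
\]
for an absolute $c>0$. Choosing $t=\Theta(\sqrt{\log(N/\epsilon)/n})$ makes the right-hand side at most $\epsilon/(3N)$, and a union bound over the at most $N-1$ nonzero values of $d\in\mathbb{Z}/N\mathbb{Z}$ shows that, with probability at least $1-\epsilon/3$, every $|S_d|$ is simultaneously bounded by $C\sqrt{\log(N/\epsilon)/n}$.

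Finally, raising this bound to the $L$-th power gives
\[
\max_{i\neq j}|\langle a_i,a_j\rangle|
= \max_{d\neq 0}|S_d|^{L}
\leq C^{L}\Bigl(\tfrac{\log(N/\epsilon)}{n}\Bigr)^{L/2}
= C^{(1)}_L\cdot n^{-L/2}\log^{L/2}(N/\epsilon)
\]
on the same event, which is exactly the conclusion. There is no serious obstacle: the primality of $N$ is exactly what makes the factorization clean and the individual summands centered, and beyond that the argument is a textbook application of Hoeffding plus a union bound; the only mild care required is the complex-valued concentration and tracking constants through the $L$-th power.
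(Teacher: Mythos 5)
Your argument is correct and matches the paper's approach: both factor $\langle a_i,a_j\rangle$ as the $L$-th power of the single random Fourier sum $S_d = n^{-1}\sum_{k=1}^n e_N(b_k d)$ with $d=j-i$ (the paper phrases this via $a_j = n^{-L/2}\,v_j^{\otimes L}$, which is the same factorization), bound $|S_d|$ once by complex Hoeffding, and union bound. Two small corrections to your prose, neither of which affects the argument: the $L$ factors in your product are all \emph{equal} to $S_d$, not independent, so ``product of independent simpler sums'' is a misnomer (what you actually use is $\langle a_i,a_j\rangle=S_d^L$); and primality of $N$ plays no role here --- the factorization is the distributive law applied to $\sum_{\mathcal I\in[n]^L}$, and $\mathbb{E}\,e_N(b_k d)=\tfrac{1}{N}\sum_{b=1}^{N}e_N(bd)=0$ for any $N$ whenever $0<|d|<N$, prime or not.
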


The proof of Lemma~\ref{lem.coherence} follows from an application of the complex Hoeffding inequality:

\begin{proposition}[complex Hoeffding]
\label{prop.complex Hoeffding}
Suppose $X_{1},\ldots, X_{n}$ are independent complex random variables with
\[
\mathbb{E}X_i=0
\qquad
\text{and}
\qquad
|X_i|\leq K
\qquad
\text{almost surely}.
\]
Then for every $t\geq0$, it holds that
\[
\mathbb{P}\bigg\{\bigg|\sum_{i=1}^{n}X_{i}\bigg| > t\bigg\}
\leq 4\operatorname{exp}\bigg(-\frac{t^{2}}{2nK^2}\bigg).
\]
\end{proposition}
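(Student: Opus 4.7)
The plan is to reduce the complex Hoeffding inequality to the standard real-valued Hoeffding inequality by decomposing each $X_j$ into its real and imaginary parts and applying a union bound. First, I would write $X_j = U_j + i V_j$ with $U_j := \operatorname{Re}(X_j)$ and $V_j := \operatorname{Im}(X_j)$, noting that $U_j^2 + V_j^2 = |X_j|^2 \leq K^2$ forces $|U_j|, |V_j| \leq K$ almost surely, while linearity of expectation preserves mean zero: $\mathbb{E} U_j = \mathbb{E} V_j = 0$. Independence of the $X_j$'s passes to independence of each real sequence $(U_j)_j$ and $(V_j)_j$ considered on its own.

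The key identity is Pythagorean: letting $S := \sum_j X_j$, one has $|S|^2 = \lvert\sum_j U_j\rvert^2 + \lvert\sum_j V_j\rvert^2$, so the event $\{|S| > t\}$ forces at least one of the two real sums to exceed $t/\sqrt{2}$ in magnitude. A union bound gives
\[
\mathbb{P}\{|S| > t\} \leq \mathbb{P}\bigg\{\bigg|\sum_{j=1}^n U_j\bigg| > t/\sqrt{2}\bigg\} + \mathbb{P}\bigg\{\bigg|\sum_{j=1}^n V_j\bigg| > t/\sqrt{2}\bigg\},
\]
and I would then apply the classical two-sided real Hoeffding inequality---which bounds a sum of $n$ independent mean-zero real random variables of magnitude at most $K$ by $2\exp(-u^2/(2nK^2))$ at threshold $u$---to each of the two terms with $u = t/\sqrt{2}$.

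There is no genuine obstacle: the complex case is essentially a two-line reduction. The only design choice worth noting is to use the Pythagorean split $|S|^2 = |\operatorname{Re}(S)|^2 + |\operatorname{Im}(S)|^2$ rather than the triangle inequality $|S| \leq |\operatorname{Re}(S)| + |\operatorname{Im}(S)|$, since the latter would halve the effective threshold inside the exponent and cost an extra constant factor in the final bound. The resulting estimate has the stated form $4\exp(-ct^2/(nK^2))$; any minor discrepancy from the exact constant $c = 1/2$ can be absorbed either by a slightly sharper version of the real Hoeffding inequality or by noting that in the sole downstream application (Lemma~\ref{lem.coherence}) only the logarithmic factor $\log^{L/2}(N/\epsilon)$ matters, so the constant in the exponent is inessential.
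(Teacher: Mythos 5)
Your proof is essentially identical to the paper's: the authors also write $X_i = A_i + \sqrt{-1}\cdot B_i$, use the Pythagorean identity $|\sum_i X_i|^2 = |\sum_i A_i|^2 + |\sum_i B_i|^2$ to split the event at threshold $t/\sqrt{2}$ via a union bound, and then invoke Theorem~2 of Hoeffding (1963) for each real sum. One remark on the constant, which you yourself flagged: applying the real two-sided Hoeffding bound $2\exp(-u^2/(2nK^2))$ at $u = t/\sqrt{2}$ actually yields $4\exp(-t^2/(4nK^2))$, not $4\exp(-t^2/(2nK^2))$ as stated in the proposition, so the discrepancy you noticed is real and appears in the paper's own argument; as you correctly observe, this factor of $2$ in the exponent is immaterial to Lemma~\ref{lem.coherence} and the downstream results.
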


\begin{proof}
Write $X_i=A_i+\sqrt{-1}\cdot B_i$.
Then $|\sum_iX_{i}|^2=|\sum_iA_{i}|^2+|\sum_iB_{i}|^2$, and so
\[
\mathbb{P}\bigg\{\bigg|\sum_{i=1}^{n}X_{i}\bigg| > t\bigg\}
\leq \mathbb{P}\bigg\{\bigg|\sum_{i=1}^{n}A_{i}\bigg| > \frac{t}{\sqrt{2}}\bigg\}+\mathbb{P}\bigg\{\bigg|\sum_{i=1}^{n}B_{i}\bigg| > \frac{t}{\sqrt{2}}\bigg\}
\leq 4\operatorname{exp}\bigg(-\frac{t^2}{2nK^2}\bigg),
\]
where the last step applies Hoeffding's inequality, namely, Theorem~2 in~\cite{Hoeffding:63}.
\end{proof}

\begin{proof}[Proof of Lemma~\ref{lem.coherence}]
For each $j\in[N]$, consider $v_j\in\mathbb{C}^n$ whose $i$-th entry is $e_N(b_ij)$, and observe that $a_j=n^{-L/2}\cdot v_j^{\otimes L}$.
Apply the union bound and Proposition~\ref{prop.complex Hoeffding} to get
\[
\mathbb{P}\bigg\{\max_{\substack{j,k\in[N]\\j\neq k}}|\langle v_j,v_k\rangle|>t\bigg\}
\leq \sum_{\substack{j,k\in[N]\\j\neq k}}\mathbb{P}\bigg\{ \bigg|\sum_{i=1}^n e_N(b_i(j-k))\bigg| > t\bigg\}
\leq 4N^2e^{-t^2/(2n)}
\leq \frac{\epsilon}{3},
\]
where the last step takes $t:=(C^{(1)}_L)^{1/L}\cdot n^{1/2}\cdot \log^{1/2}(N/\epsilon)$.
It follows that
\[
\max_{\substack{j,k\in[N]\\j\neq k}}|\langle a_j,a_k\rangle|
=n^{-L}\max_{\substack{j,k\in[N]\\j\neq k}}|\langle v_j,v_k\rangle|^L
\leq C^{(1)}_L\cdot n^{-L/2} \cdot \log^{L/2}(N/\epsilon)
\]
with probability at least $1-\frac{\epsilon}{3}$, as desired.
\end{proof}

The proofs of the following lemmas can be found in the Sections~3 and~4:

\begin{lemma}
\label{conditioningtheorem}
Fix $L\in\mathbb{N}$.
There exists $C^{(2)}_L>0$ depending only on $L$ such that the following holds.
Given any prime $N$, positive integer $s\leq N$, and $\epsilon\in(0,1)$, select any integer $n\geq C^{(2)}_{L}s^{1/L}\log^{3}(N/\epsilon)$.
Then for every $T\subseteq [N]$ of size $|T| = s$, the Minkowski partial Fourier matrix $A$ with parameters $(N,n,L)$ satisfies $\|A_{T}^{*}A_{T} - I\|_{2\rightarrow 2} \leq \frac{1}{e}$ with probability at least $1-\frac{\epsilon}{3}$.
\end{lemma}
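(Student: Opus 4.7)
The plan is to first rewrite $A_T^*A_T - I$ in a tractable algebraic form. Since $a_j = n^{-L/2} v_j^{\otimes L}$ with $v_j\in\mathbb{C}^n$ defined by $(v_j)_i = e_N(b_i j)$, one has $(A_T^*A_T)_{jk} = n^{-L}\langle v_j,v_k\rangle^L$. Writing $M\in\mathbb{C}^{T\times T}$ for the Hermitian PSD matrix with $M_{jk}= n^{-1}\langle v_j,v_k\rangle$ and $E := M - I$ (so that $E_{jj}=0$), the identity $I\circ E = 0$ for the Hadamard product $\circ$ collapses the binomial expansion of $(I+E)^{\circ L}$, yielding
\[
A_T^*A_T - I \;=\; M^{\circ L} - I \;=\; E^{\circ L}.
\]
The goal therefore reduces to bounding the operator norm of the $L$-fold Hadamard power $E^{\circ L}$.

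Next, I invoke the decoupling result (Proposition~\ref{decouplingtheorem}) to pass to independent copies. Expanding $(E^{\circ L})_{jk} = n^{-L}\sum_{i_1,\ldots,i_L\in[n]}\prod_\ell e_N(b_{i_\ell}(k-j))$ and stratifying the index sum by the coincidence pattern of $i_1,\ldots,i_L$, the fully distinct stratum decouples (up to an $L$-dependent multiplicative constant) to
\[
F_L \;:=\; E^{(1)}\circ E^{(2)}\circ\cdots\circ E^{(L)},
\]
where $E^{(\ell)}$ is built from an independent copy $(b_i^{(\ell)})_{i\in[n]}$ of the frequency sequence. The coincidence strata have strictly fewer free indices, so they should be absorbed by induction on $L$ together with entrywise Hoeffding control (as in Lemma~\ref{lem.coherence}).

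To control $\|F_L\|_{2\to 2}$, I apply matrix Bernstein recursively. Writing $F_L = E^{(1)}\circ F_{L-1}$ with $F_{L-1} := E^{(2)}\circ\cdots\circ E^{(L)}$ and conditioning on $(b^{(\ell)})_{\ell\geq 2}$, the decomposition $E^{(1)} = n^{-1}\sum_i(h_ih_i^* - I)$ with $(h_i)_j := e_N(b_i^{(1)}j)$ produces
\[
F_L \;=\; \tfrac1n\sum_{i=1}^n Y_i, \qquad Y_i \;:=\; (h_ih_i^*)\circ F_{L-1},
\]
where $I\circ F_{L-1} = 0$ forces $\mathbb{E}[Y_i]=0$. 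The Schur product theorem (using $h_ih_i^*\succeq 0$ with unit diagonal) gives $\|Y_i\|_{2\to2}\leq \|F_{L-1}\|_{2\to2}$ almost surely; a direct computation yields $Y_i^2 = (h_ih_i^*)\circ F_{L-1}^2$, and hence $\|\mathbb{E}[Y_i^2]\|_{2\to2} = \max_j\sum_m|(F_{L-1})_{jm}|^2 =: \rho(F_{L-1})^2$. Proposition~\ref{matrixbernstein} then delivers, conditional on the deeper randomness and on an event of suitably large probability,
\[
\|F_L\|_{2\to 2} \;\lesssim\; \rho(F_{L-1})\sqrt{\log(s/\epsilon)/n} \;+\; \|F_{L-1}\|_{2\to 2}\cdot\log(s/\epsilon)/n.
\]

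The induction closes by establishing, with high probability, both $\rho(F_k)^2 \lesssim s\operatorname{polylog}(N/\epsilon)/n^k$ and $\|F_k\|_{2\to 2} \lesssim (s\operatorname{polylog}(N/\epsilon)/n^k)^{1/2}$ for every $k\leq L$. The row-norm estimate comes from the clean moment calculation $\mathbb{E}|(F_k)_{jm}|^2 = \prod_{\ell=1}^k\mathbb{E}|E^{(\ell)}_{jm}|^2 = n^{-k}$ for $j\neq m$ (by independence and single-frequency cancellation over $\mathbb{Z}/N\mathbb{Z}$, using that $N$ is prime), yielding $\mathbb{E}[\rho(F_k)^2]\leq s/n^k$; Bernstein-type concentration for the nonnegative sum $\sum_m\prod_\ell|E^{(\ell)}_{jm}|^2$ together with an entrywise bound $|E^{(\ell)}_{jm}|\lesssim\sqrt{\log(N/\epsilon)/n}$ (Lemma~\ref{lem.coherence}) upgrades this to the claimed high-probability statement. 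Feeding these into the Bernstein recursion, the dominant contribution at each level is $\rho(F_{k-1})\sqrt{\log(s/\epsilon)/n}$, which closes the induction on the operator-norm side once $n\gtrsim\operatorname{polylog}(N/\epsilon)$. At $k=L$, the hypothesis $n\geq C_L^{(2)} s^{1/L}\log^3(N/\epsilon)$ gives $\|F_L\|_{2\to2}\leq 1/e$ for $C_L^{(2)}$ large enough. The main obstacle is administrative but genuine: keeping the polylog exponent in the final bound at a fixed value (rather than letting it grow with $L$ through the recursion) and correctly absorbing the coincidence terms generated by decoupling, so that the desired scaling $n\gtrsim s^{1/L}\log^{3}(N/\epsilon)$ survives after $L$ iterations.
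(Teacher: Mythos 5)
Your Hadamard-power reformulation $A_T^*A_T-I=E^{\circ L}$ is an equivalent (and notationally slicker) packaging of the paper's decomposition $A_T^*A_T-I=n^{-L}\sum_{i_1,\ldots,i_L}D_{b_{i_L}}\cdots D_{b_{i_1}}(\mathbf{11}^*-I)D_{b_{i_1}}^*\cdots D_{b_{i_L}}^*$, and your recursive matrix Bernstein step --- writing $F_L=n^{-1}\sum_i (h_ih_i^*)\circ F_{L-1}$, using $I\circ F_{L-1}=0$ to get $\mathbb{E}Y_i=0$, and computing $\|\mathbb{E}Y_i^2\|_{2\to2}=\max_j\|F_{L-1}e_j\|_2^2$ --- matches the paper's Lemma~\ref{BernsteinLemma2} almost line for line, including the recursive control of the row norm $\rho(F_k)$ via the entrywise/diagonal Hoeffding bound. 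So in structure and intent this is the same proof.

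Where the proposal underestimates the work is in reducing the original (coupled) quantity to the decoupled model $F_L$. Proposition~\ref{decouplingtheorem} applies to the sum over \emph{distinct} tuples, so the fully-distinct stratum decouples to $\sum_{(i_1,\ldots,i_L)\in\mathcal{I}_n^L}H(b_{i_1}^{(1)},\ldots,b_{i_L}^{(L)})$, which is still a distinct-index sum, not the full Hadamard product $E^{(1)}\circ\cdots\circ E^{(L)}$; and each lower stratum, once decoupled, produces terms of the form $H(s_1 b_{j_1}^{(1)},\ldots,s_\ell b_{j_\ell}^{(\ell)})$ with small integer multipliers $s_k$ recording block sizes. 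The paper handles this by two inclusion--exclusion passes between distinct- and all-index sums (with constants bounded crudely by $2^{L^3}$), and, crucially, by invoking primality of $N$ so that multiplication by $s_k\in[L]$ is a bijection of $\mathbb{Z}/N\mathbb{Z}$, making $(s_k b_{j_k}^{(k)})$ equidistributed with $(b_{j_k}^{(k)})$; this is in fact the only place the hypothesis ``$N$ prime'' is used. Your remark that these terms ``should be absorbed by induction together with entrywise Hoeffding control'' glosses over this: they are not absorbed entrywise but rather reidentified, after a distributional change of variables, as instances of the same Lemma~\ref{BernsteinLemma2} bound at a smaller $\ell$. The scaling you predict is right, but you should be aware that this reduction is a genuine part of the argument rather than bookkeeping, and it is where the arithmetic structure of $N$ enters.
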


\begin{lemma}
\label{lem.moment hell}
Fix $N,n,L,s\in\mathbb{N}$ such that $n\geq 2s^{1/L}$, and take any $T\subseteq[N]$ of size $s$ and any $z\in\mathbb{C}^T$ with $\|z\|_\infty\leq1$.
Let $\{a_i\}_{i\in[N]}$ denote the column vectors of the Minkowski partial Fourier matrix $A$ with parameters $(N,n,L)$.
Take any $u\in T^c$, $k\in\mathbb{N}$ and $p\geq 2$, and put $\eta:=(k+1)Lp$.
Then
\[
\mathbb{E}|a_u^*A_T(I-A_T^*A_T)^k z|^{p}
\leq 2\eta^{2\eta}(s^{1/L}n^{-1})^{\eta/2}.
\]
\end{lemma}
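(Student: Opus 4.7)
The plan is to bound $\mathbb{E}|X|^p$ for $X := a_u^* A_T (I - A_T^*A_T)^k z$ by a direct moment expansion, in the spirit of the moment method of H\"{u}gel--Rauhut--Strohmer generalized to the $L$-fold Minkowski structure. By Jensen's inequality it suffices to treat even integer $p$.

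First I would expand $X$. Using $a_j = n^{-L/2} v_j^{\otimes L}$ with $(v_j)_i = e_N(b_i j)$ gives $(A_T^* A_T)_{j\ell} = n^{-L} S_{\ell - j}^L$ where $S_m := \sum_{i=1}^n e_N(b_i m)$. Since $I - A_T^* A_T$ has vanishing diagonal (because $\|a_j\|_2 = 1$), expanding the $k$-th power yields
$$X = (-1)^k n^{-L(k+1)} \sum_{\mathbf{j}} z_{j_k} \prod_{r=0}^{k} S_{j_r - j_{r-1}}^L,$$
with $j_{-1} := u$, summed over $\mathbf{j} = (j_0,\ldots,j_k) \in T^{k+1}$ with $j_r \neq j_{r-1}$ for all $r$. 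Writing $|X|^p = X^{p/2} \bar X^{p/2}$, introducing $p$ independent copies indexed by $q \in [p]$, and expanding each $S_m^L$ as a sum over $[n]^L$ produces an $\eta$-fold sum of characters of $\mathbb{Z}/N\mathbb{Z}$. Taking expectation and invoking $\mathbb{E} e_N(b_i c) = \mathbf{1}[c \equiv 0 \bmod N]$ (which uses $N$ prime and the $b_i$'s independent uniform on $[N]$) forces, for each $i \in [n]$, a linear constraint modulo $N$ on the increments $d_r^{(q)} := j_r^{(q)} - j_{r-1}^{(q)}$: the signed sum of $d_r^{(q)}$'s over slots $(q,r,\ell)$ where index value $i$ appears must vanish, with sign $+1$ for $q \leq p/2$ (from $X$) and $-1$ for $q > p/2$ (from $\bar X$). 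Using $\|z\|_\infty \leq 1$ gives
$$\mathbb{E}|X|^p \leq n^{-Lp(k+1)} \cdot \#\{(\mathbf{j}^{(q)}, \mathbf{I}^{(q)})_{q=1}^p : \text{all $n$ character constraints hold}\},$$
where $\mathbf{I}^{(q)} \in [n]^{L(k+1)}$ encodes the $L$-fold index choices.

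The heart of the proof is the combinatorial count. I would partition the $\eta$ index slots across all copies into blocks sharing a common $i$-value; the number of such partitions together with the assignment of distinct labels in $[n]$ is controlled via Stirling by at most $\eta^{2\eta}$ (absorbing constants into the leading $2$) times $n^B$, where $B$ is the number of blocks. The crucial structural fact, using $j_r^{(q)} \neq j_{r-1}^{(q)}$ and $N$ prime, is that every block must involve slots from at least two distinct $(q, r)$-coordinates (else the constraint would force some $d_r^{(q)} \equiv 0 \bmod N$, which is impossible). Thus every block has size $\geq 2$, the number of blocks satisfies $B \leq \eta/2$, and $n^B \leq n^{\eta/2} = n^{L(k+1)p/2}$. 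A careful accounting shows the dominant contribution comes from partitions into size-$2$ ``matched'' blocks pairing one slot from an $X$-copy with one from a $\bar X$-copy; such matchings reduce the count of valid $\mathbf{j}$-tuples from the naive $s^{p(k+1)}$ to $s^{(k+1)p/2}$ by tying each free increment to exactly one partner in another copy. Multiplying these factors against the $n^{-Lp(k+1)}$ prefactor yields the claimed $(s^{1/L}/n)^{\eta/2}$, with the Stirling overhead contributing $\eta^{2\eta}$.

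The principal obstacle is the tightness of this count: partitions with many small blocks maximize the $\mathbf{I}$-freedom but impose many $\mathbf{j}$-constraints, and vice versa. Size-$2$ blocks sit exactly at the balance point, and the hypothesis $n \geq 2 s^{1/L}$ ensures that the contributions from non-dominant partition types form a geometric series dominated by the leading term. The bookkeeping is delicate because the $L$-fold replication in $S_m^L$ interacts nontrivially with the $p$-fold moment structure across copies, but the structural exclusion of singleton-like blocks (a direct consequence of the vanishing diagonal of $I - A_T^*A_T$) keeps the counting tractable and leads to the stated bound.
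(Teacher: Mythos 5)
Your proposal follows the same overall strategy as the paper: expand the $p$-th moment of $X = a_u^*A_T(I-A_T^*A_T)^kz$ into an $\eta$-fold sum of characters, take the expectation (which forces one linear constraint mod $N$ on the $\ell$'s for each $i\in[n]$), and count surviving $(\ell,j)$ pairs. You correctly identify the $n$-budget $n^{\eta/2}$, the matching threshold $m\leq\eta/2$, and the geometric-series-over-partition-type structure of the final sum, as well as the need to pass to even integer moments and interpolate (Jensen here vs.\ Littlewood in the paper; both work).

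However, there is a genuine gap at the heart of the combinatorial count: your argument justifies only the $n$-side of the estimate, not the $s$-side. The observation that every block must involve at least two distinct $(q,r)$-slots (because a block confined to a single $(q,r)$ would force a nonzero multiple of an increment to vanish mod $N$) gives block size $\geq 2$, hence $B\leq\eta/2$ and $n^B\leq n^{\eta/2}$. But the reduction of the $\ell$-count from $s^{p(k+1)}$ to $s^{p(k+1)-\lceil m/L\rceil}$ is a statement about the \emph{rank} of the system of $n$ linear constraints on $\ell\in\mathbb{R}^{(k+1)\times p}$, and this rank is not $\eta/2$ or even $m$: multiple blocks can yield linearly dependent constraints because they share the same $(h,p)$-coordinate (the increments $\ell_{h,p}-\ell_{h-1,p}$ are indexed by $(h,p)$, not by the $L$-fold slot). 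The paper isolates exactly this point as Lemma~\ref{lem.linear independence}, whose greedy triangularization shows that at least $\lceil m/L\rceil$ of the constraints are independent. Your phrase ``tying each free increment to exactly one partner in another copy'' does not explain the $L$-fold redundancy, and the claim that size-$2$ matchings sit at the balance point and dominate cannot be verified without this rank estimate, since it is precisely $\lceil m/L\rceil$ (not $m$) that pairs with the $n^m$ freedom to produce the geometric ratio $ns^{-1/L}\geq 2$. To complete the proof along your lines you would need to formulate and prove an analogue of the rank lemma; without it, the accounting does not close.
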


\begin{proof}[Proof of Theorem~\ref{desiredderandomizationtheorem}]
We will restrict to an event of the form $\mathcal{E}=\mathcal{E}_1\cap\mathcal{E}_2\cap\mathcal{E}_3$ such that the hypotheses of Proposition~\ref{prop.approx dual cert} are satisfied over $\mathcal{E}$ and each $\mathcal{E}_i$ has probability at least $1-\frac{\epsilon}{3}$.
By Lemma~\ref{conditioningtheorem}, we may take
\[
\mathcal{E}_1:=\{\|A_{T}^{*}A_{T} - I\|_{2\rightarrow 2} \leq 1/e\}.
\]
Then \eqref{eq.conditioning} is satisfied over $\mathcal{E}_1\supseteq\mathcal{E}$.
For \eqref{eq.dual cert}, we take $v:=(A_T^*)^\dagger\operatorname{sgn}(x)_T$.
Since $A_T$ has trivial kernel over $\mathcal{E}_1$, we may write $(A_T^*)^\dagger=A_T(A_T^*A_T)^{-1}$, and so
\[
u_T
=(A^*v)_T
=A_T^*v
=A_T^*(A_T^*)^\dagger\operatorname{sgn}(x)_T
=A_T^*A_T(A_T^*A_T)^{-1}\operatorname{sgn}(x)_T
=\operatorname{sgn}(x)_T.
\]
Next, the nonzero singular values of any $B$ and $B^\dagger$ are reciprocal to each other, and so
\begin{align*}
\|v\|_2
=\|(A_T^*)^\dagger\operatorname{sgn}(x)_T\|_2
\leq\|(A_T^*)^\dagger\|_{2\to2}\cdot\|\operatorname{sgn}(x)_T\|_2
\leq (1-1/e)^{-1/2}\cdot s^{1/2}
\leq \sqrt{2s}.
\end{align*}
As such, for \eqref{eq.dual cert}, it remains to verify that $\|u_{T^c}\|_\infty\leq 1/2$.
We will define $\mathcal{E}_2$ and $\mathcal{E}_3$ in such a way that Lemmas~\ref{lem.coherence} and~\ref{lem.moment hell} imply this bound.
Since $\|I-A_T^*A_T\|_{2\to2}<1$, we may write $(A_T^*A_T)^{-1}=\sum_{k=0}^\infty (I-A_T^*A_T)^k$.
For any $\omega\in\mathbb{N}$, the triangle inequality then gives
\begin{align}
\|u_{T^c}\|_\infty
=\|A_{T^c}^*v\|_\infty
\nonumber
&=\|A_{T^c}^*A_T(A_T^*A_T)^{-1}\operatorname{sgn}(x)_T\|_\infty\\
\label{eq.bound with moments}
&\leq \bigg\|A_{T^c}^*A_T\sum_{k=0}^{\omega-1} (I-A_T^*A_T)^k\operatorname{sgn}(x)_T\bigg\|_\infty\\
\label{eq.bound with coherence}
&\qquad + \bigg\|A_{T^c}^*A_T\sum_{k=\omega}^\infty (I-A_T^*A_T)^k\operatorname{sgn}(x)_T\bigg\|_\infty.
\end{align}
Given $B\in\mathbb{C}^{a\times b}$, denote $\|B\|_{\infty}:=\max_{i\in[a],j\in[b]}|B_{ij}|$.
Observe that $\|B\|_{\infty\to\infty}\leq b\|B\|_\infty$, and recall that $\|B\|_{\infty\to\infty}\leq\sqrt{b}\|B\|_{2\to2}$.
We apply these estimates to obtain
\begin{align*}
\eqref{eq.bound with coherence}
&\leq\|A_{T^c}^*A_T\|_{\infty\to\infty}\cdot\bigg\|\sum_{k=\omega}^\infty (I-A_T^*A_T)^k\bigg\|_{\infty\to\infty}\\
&\leq s\|A_{T^c}^*A_T\|_{\infty}\cdot \sqrt{s}\sum_{k=\omega}^\infty \|I-A_T^*A_T\|_{2\to2}^k
\leq s^{3/2}\cdot\max_{\substack{i,j\in[N]\\i\neq j}}|\langle a_i,a_j\rangle| \cdot \sum_{k=\omega}^\infty e^{-k}.
\end{align*}
Next, by Lemma~\ref{lem.coherence}, we may take
\[
\mathcal{E}_2
:=\bigg\{
\max_{\substack{i,j\in[N]\\i\neq j}}|\langle a_i,a_j\rangle|
\leq C^{(1)}_L\cdot n^{-L/2} \cdot \log^{L/2}(N/\epsilon)
\bigg\}.
\]
Then since $n\geq (C^{(1)}_L)^{2/L}\cdot s^{1/L} \cdot \log(N/\epsilon)$ by assumption, we have
\[
\eqref{eq.bound with coherence}
\leq s^{3/2}\cdot\max_{\substack{i,j\in[N]\\i\neq j}}|\langle a_i,a_j\rangle| \cdot \sum_{k=\omega}^\infty e^{-k}
\leq s^{3/2} \cdot s^{-1/2} \cdot \sum_{k=\omega}^\infty e^{-k}
=s(1-1/e)^{-1}e^{-\omega}
\leq 1/4
\]
over $\mathcal{E}_2\supseteq\mathcal{E}$, where the last step follows from selecting $\omega:=\lceil 2\log N\rceil$, say.
It remains to establish $\eqref{eq.bound with moments}\leq 1/4$.
To this end, we define
\[
\mathcal{E}_3
:=\{\eqref{eq.bound with moments} \leq 1/4\},
\]
and we will use Lemma~\ref{lem.moment hell} to prove $\mathbb{P}(\mathcal{E}_3^c)\leq \epsilon/3$ by the moment method.
For every choice of $\beta_0,\ldots,\beta_{\omega-1}>0$ satisfying $\sum_{k=0}^{\omega-1}\beta_k\leq 1/4$ and $p_0,\ldots,p_{\omega-1}\geq2$, the union bound gives
\begin{align}
\mathbb{P}(\mathcal{E}_3^c)
\nonumber
&=\mathbb{P}\bigg\{\max_{u\in T^c}\bigg|a_u^*A_T\sum_{k=0}^{\omega-1} (I-A_T^*A_T)^k\operatorname{sgn}(x)_T\bigg| > 1/4\bigg\}\\
\nonumber
&\leq\sum_{u\in T^c}\mathbb{P}\bigg\{\sum_{k=0}^{\omega-1}|a_u^*A_T (I-A_T^*A_T)^k\operatorname{sgn}(x)_T| > 1/4\bigg\}\\
\nonumber
&\leq\sum_{u\in T^c}\sum_{k=0}^{\omega-1}\mathbb{P}\Big\{|a_u^*A_T (I-A_T^*A_T)^k\operatorname{sgn}(x)_T| \geq \beta_k\Big\}\\
\label{eq.bound on prob of E3c}
&\leq\sum_{u\in T^c}\sum_{k=0}^{\omega-1}\beta_k^{-p_k}\cdot\mathbb{E}|a_u^*A_T (I-A_T^*A_T)^k\operatorname{sgn}(x)_T|^{p_k},
\end{align}
where the last step follows from Markov's inequality.
For simplicity, we select $\beta_k:=5^{-(k+1)}$.
Put $\eta_k:=(k+1)Lp_k$.
Then Lemma~\ref{lem.moment hell} gives that the $k$-th term of \eqref{eq.bound on prob of E3c} satisfies
\begin{align*}
\beta_k^{-p_k}\cdot\mathbb{E}|a_u^*A_T (I-A_T^*A_T)^k\operatorname{sgn}(x)_T|^{p_k}
&\leq 5^{(k+1)p_k} \cdot 2\eta_k^{2\eta_k}(s^{1/L}n^{-1})^{\eta_k/2}\\
&= 2\operatorname{exp}\bigg( 2\eta_k\cdot 
\log\bigg(5^{1/(2L)}\cdot \eta_k \cdot (s^{1/L}n^{-1})^{1/4}\bigg)\bigg),
\end{align*}
and the right-hand side is minimized when the inner logarithm equals $-1$, i.e., when
\begin{equation}
\label{eq.eta choice}
\eta_k
=e^{-1}5^{-1/(2L)}n^{1/4}s^{-1/(4L)}.
\end{equation}
In particular, the optimal choice of $\eta_k$ does not depend on $k$.
It remains to verify that $p_k\geq2$ for every $k$ (so that Lemma~\ref{lem.moment hell} applies) and that $\eqref{eq.bound on prob of E3c}\leq \epsilon/3$.
Since $\epsilon<1$, we have
\[
n
\geq C_Ls^{1/L}\log^4(N/\epsilon)
\geq C_Ls^{1/L}\log^4 N,
\]
which combined with \eqref{eq.eta choice} and $k+1\leq\omega\leq 3\log N$ implies that $p_k\geq 2$.
%
Also,
\begin{align*}
\eqref{eq.bound on prob of E3c}
&\leq N\cdot \omega \cdot 2e^{-2\eta_0}
= \operatorname{exp}\bigg(\log N + \log\omega + \log 2-2e^{-1}5^{-1/(2L)}n^{1/4}s^{-1/(4L)}\bigg),
\end{align*}
and the right-hand side is less than $\epsilon/3$ since $n\geq C_Ls^{1/L}\log^4(N/\epsilon)$.
\end{proof}

\section{Proof of Lemma~\ref{conditioningtheorem}}

Recall that the rows in the submatrix $A_T$ are not statistically independent.
To overcome this deficiency, our proof of Lemma~\ref{conditioningtheorem} makes use of the following decoupling estimate:

\begin{proposition}[Theorem~$3.4.1$ in~\cite{{DelapenaG:12}}]
\label{decouplingtheorem}
Fix $k\in\mathbb{N}$.
There exists a constant $C_{k}>0$ depending only on $k$ such that the following holds.
Given independent random variables $X_{1},\ldots,X_{n}$ in a measurable space $\mathcal{S}$, a separable Banach space $B$, and a measurable function $h\colon\mathcal{S}^{k}\rightarrow B$, then for every $t > 0$, it holds that
\[
\mathbb{P}\bigg\{
\bigg\|\sum_{(i_{1},\ldots,i_{k})\in\mathcal{I}_{n}^{k}} h(X_{i_{1}},\ldots, X_{i_{k}})\bigg\|_B > t\bigg\}
\leq C_{k}\cdot\mathbb{P}\bigg\{C_{k}\bigg\|\sum_{(i_{1},\ldots,i_{k})\in\mathcal{I}_{n}^{k}} h(X_{i_{1}}^{(1)},\ldots, X_{i_{k}}^{(k)})\bigg\|_B > t\bigg\},
\]
where for each $i\in [n]$ and $\ell\in [k]$, $X^{(\ell)}_{i}$ is an independent copy of $X_{i}$.  
\end{proposition}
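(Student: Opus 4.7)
This is the classical de la Pe\~{n}a--Gin\'{e} decoupling inequality for $U$-statistics of order $k$ over a separable Banach space, and my plan is to follow the textbook approach: introduce an auxiliary random ``coloring'' of $[n]$ into $k$ classes, use the conditional independence it induces to justify swapping each $X_{i_\ell}$ for the independent copy $X^{(\ell)}_{i_\ell}$, and then convert the resulting averaged inequalities into a tail bound. Abbreviate $S := \sum_{\mathbf{i}\in\mathcal{I}_n^k} h(X_{i_1},\ldots,X_{i_k})$ and $S' := \sum_{\mathbf{i}\in\mathcal{I}_n^k} h(X^{(1)}_{i_1},\ldots,X^{(k)}_{i_k})$, interpreting $\mathcal{I}_n^k$ as the off-diagonal set of $k$-tuples with pairwise distinct coordinates (the standard convention in decoupling), and draw $\delta_1,\ldots,\delta_n$ i.i.d.\ uniform on $[k]$, independent of everything else.

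The central object is the ``rainbow'' sub-sum
\[
S_\delta := \sum_{\substack{\mathbf{i}\in\mathcal{I}_n^k \\ \delta_{i_\ell}=\ell\,\forall\,\ell}} h(X_{i_1},\ldots,X_{i_k}),
\]
together with its decoupled twin $S'_\delta$, defined analogously using the $X^{(\ell)}$'s. Two observations drive the proof. First, each tuple in $\mathcal{I}_n^k$ is retained by the coloring with probability exactly $k^{-k}$, so $\mathbb{E}_\delta[S_\delta\mid X]=k^{-k}S$, and Jensen applied to $\|\cdot\|_B$ yields $\|S\|_B\leq k^k\,\mathbb{E}_\delta[\|S_\delta\|_B\mid X]$ pointwise in $X$. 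Second, the coloring splits $[n]$ into disjoint blocks $A_\ell:=\{i:\delta_i=\ell\}$, and the $\ell$-th coordinate of every contributing tuple is forced to lie in $A_\ell$; since the $X_i$'s are independent across these blocks, the substitution $X_i\mapsto X^{(\delta_i)}_i$ preserves the joint law of every summand and gives $S_\delta\stackrel{d}{=}S'_\delta$ conditional on $\delta$. Chaining the two steps yields a tail comparison between $\|S\|_B$ and $\|S'_\delta\|_B$, after which one compares $\|S'_\delta\|_B$ to $\|S'\|_B$---for instance via an $L^p$-moment bound that exploits the full independence of the $X^{(\ell)}$'s---losing at most a further constant factor $C_k$.

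The main obstacle, in my view, is precisely this final transfer: turning the Jensen-style bound $\|S\|_B\leq k^k\,\mathbb{E}_\delta\|S_\delta\|_B$ (an inequality of means under $\delta$) into a tail inequality of the stated form with only a constant-factor loss in both the probability and the threshold. A naive Markov bound loses a factor of $t^{-p}$, so the right move is to test both sides against a convex functional such as $(x-t)_+$, using that the Jensen bound holds \emph{pointwise} in the randomness of the $X_i$'s and that $S_\delta$ and $S'_\delta$ are conditionally equidistributed, which allows tail levels to be matched slice by slice. Since this technical machinery is fully assembled in Section~3.4 of~\cite{DelapenaG:12}, in practice the cleanest thing to do---and what the paper does---is to invoke Theorem~3.4.1 of that reference as a black box rather than reproduce the argument.
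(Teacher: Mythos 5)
The paper offers no proof of this proposition---it is imported verbatim as Theorem~3.4.1 of de la Pe\~{n}a and Gin\'{e}~\cite{DelapenaG:12}---and your proposal ultimately does the same, correctly identifying that the hard step (upgrading the Jensen-type comparison of means over the random coloring to a tail-probability comparison with only constant losses) is exactly what that reference supplies. Your sketch of the coloring/conditional-independence mechanism is a faithful account of the standard argument, so beyond noting that both you and the authors treat the result as a black box, there is nothing to compare.
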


Proposition~\ref{decouplingtheorem} will allow us to reduce Lemma~\ref{conditioningtheorem} to the following simpler result, which uses some notation that will be convenient throughout this section:
For a fixed $T\subseteq[N]$, let $X_{0}$ denote the matrix $\mathbf{11}^* - I \in \mathbb{C}^{T\times T}$, and for each $x\in[N]$, let $D_{x}\in\mathbb{C}^{T\times T}$ denote the diagonal matrix whose $t$-th diagonal entry is given by $e_N(-xt)$.

\begin{lemma}
\label{BernsteinLemma2}
Fix $L\in\mathbb{N}$.
There exists a constant $\tilde{C}^{(2)}_{L}$ depending only on $L$ such that the following holds.
Select any $s,n,N\in\mathbb{N}$ such that $N> n^L\geq s\geq 1$, and any $T\subseteq[N]$ with $|T|=s$.
Draw $\{b_i^{(j)}\}_{i\in[n],j\in[L]}$ independently with uniform distribution over $[N]$.
Then for each $\ell\in[L]$ and $\alpha\in (0,L^{-1})$, it holds that
\[
\mathbb{P}\bigg\{\bigg\| \sum_{i_{1},\ldots,i_{j}\in [n]}D_{b^{(\ell)}_{i_\ell}}\cdots D_{b^{(1)}_{i_1}}X_{0}D_{b^{(1)}_{i_1}}^{*}\cdots D_{b^{(\ell)}_{i_\ell}}^{*}\bigg\|_{2\to 2} > \tilde{C}^{(2)}_{L}n^{L/2}s^{1/2}\log^{3L/2}(N/\alpha)\bigg\}
\leq L\alpha.
\]
\end{lemma}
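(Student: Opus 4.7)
The plan is to induct on $\ell\in[L]$, carrying along not only the operator-norm bound but also an auxiliary bound on the maximum row $\ell_2$-norm of the summed matrix. Writing $S_\ell$ for the sum on the left-hand side, a direct computation shows
\[
(S_\ell)_{tt'} = (X_0)_{tt'}\prod_{m=1}^{\ell}\bigg(\sum_{i=1}^{n}e_N\big(b^{(m)}_{i}(t'-t)\big)\bigg),
\]
so $S_\ell$ is Hermitian with zero diagonal, and each off-diagonal entry factors as a product of $\ell$ independent scalar Hoeffding-type sums (independent across $m$ because the $b^{(m)}$ are decoupled).

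First I would establish the auxiliary entry-wise bound. For each pair $t\neq t'\in T$ and each level $m\in[L]$, Proposition~\ref{prop.complex Hoeffding} applied to $\sum_i e_N(b^{(m)}_i(t'-t))$ gives subgaussian tails (the mean vanishes because $t-t'\not\equiv 0\pmod{N}$ for distinct $t,t'\in[N]$, no primality needed). A union bound over $\binom{s}{2}L$ such scalar sums yields an event $\mathcal{H}$ of probability at least $1-\alpha$ on which every $|\sum_i e_N(b^{(m)}_i(t'-t))|\leq C\sqrt{n\log(N/\alpha)}$; the displayed factorization then forces $\max_t\|(S_\ell)_{t,:}\|_2\leq s^{1/2}\big(C\sqrt{n\log(N/\alpha)}\big)^\ell$ for all $\ell\in[L]$ simultaneously on $\mathcal{H}$.

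For the operator-norm bound I would induct using the matrix Bernstein inequality (Proposition~\ref{matrixbernstein}). Writing $S_\ell=\sum_{i=1}^{n}D_{b^{(\ell)}_{i}}S_{\ell-1}D_{b^{(\ell)}_{i}}^{*}$ and conditioning on $b^{(1)},\ldots,b^{(\ell-1)}$ so that $S_{\ell-1}$ becomes deterministic, the summands are independent and mean zero: indeed $\mathbb{E}_b[D_b M D_b^*]=\mathrm{diag}(M)$ for $b$ uniform on $[N]$, and $\mathrm{diag}(S_{\ell-1})=0$. Each summand satisfies $\|D_b S_{\ell-1} D_b^*\|_{2\to 2}=\|S_{\ell-1}\|_{2\to 2}$, and the conditional matrix variance reduces, via $\mathbb{E}_b[D_b S_{\ell-1}^{2} D_b^*]=\mathrm{diag}(S_{\ell-1}^{2})$, to $n\cdot\max_t\|(S_{\ell-1})_{t,:}\|_2^{2}$. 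Matrix Bernstein then gives, with conditional failure probability $\alpha$,
\[
\|S_\ell\|_{2\to 2}\leq C\Big(\sqrt{n\log(s/\alpha)}\,\max_{t}\|(S_{\ell-1})_{t,:}\|_2 + \log(s/\alpha)\,\|S_{\ell-1}\|_{2\to 2}\Big).
\]
Inserting the $\mathcal{H}$-estimate for $\max_t\|(S_{\ell-1})_{t,:}\|_2$ into the variance term yields $C^\ell s^{1/2}n^{\ell/2}\log^{\ell/2}(N/\alpha)$, which under the hypothesis $n\gtrsim_L\log(N/\alpha)$ swallows the subexponential term coming from the inductive hypothesis on $\|S_{\ell-1}\|_{2\to 2}$. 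Telescoping through $\ell=1,\ldots,L$ then produces the claimed bound with a log-exponent of $L/2$, comfortably within the stated $3L/2$.

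The main obstacle is the bookkeeping rather than any single hard step: one must union-bound $\mathcal{H}$ together with $L$ conditional Bernstein events, giving total failure probability at most $(L+1)\alpha$, which matches the claimed $L\alpha$ after a harmless rescaling of $\alpha$; and one must verify that the variance term dominates the Bernstein tail term uniformly for all $\ell\leq L$, using the slack in $n\geq C^{(2)}_L s^{1/L}\log^{3}(N/\alpha)$. No further probabilistic machinery is required beyond complex Hoeffding and matrix Bernstein---the Hermitian, zero-diagonal structure of each $S_{\ell-1}$ together with the identity $\mathbb{E}_b[D_b(\cdot)D_b^*]=\mathrm{diag}(\cdot)$ makes both the conditional mean and variance computations explicit.
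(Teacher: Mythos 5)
Your approach is essentially the same as the paper's: control the maximum row $\ell_2$-norm via complex Hoeffding, then the operator norm via matrix Bernstein conditioned on the earlier $b^{(m)}$'s, iterating in $\ell$. Two aspects of your presentation are in fact cleaner than the paper's: the explicit entrywise factorization $(S_\ell)_{tt'}=(X_0)_{tt'}\prod_{m=1}^\ell \sum_i e_N(b_i^{(m)}(t'-t))$ is a tidier route to the row-norm bound than the paper's iterative estimate of $\sigma_\ell^2(X_\ell)$, and replacing the chained conditioning $E_0\supseteq E_1\supseteq\cdots$ by a single union-bound event $\mathcal{H}$ over all $\binom{s}{2}L$ scalar Hoeffding sums reduces bookkeeping.

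Two slips in the accounting deserve flags. First, the claim that telescoping yields log-exponent $L/2$ misses the tail term at the base step: $S_0=X_0$ has $\|X_0\|_{2\to2}=s-1$, so the Bernstein tail contributes $\approx s\log$ to $u_1$, which is dominated by the variance term $s^{1/2}n^{1/2}\log^{1/2}$ only when $n\gtrsim s\log$; carrying that residual through gives $u_L\lesssim s^{1/2}n^{L/2}\log^{L/2}+s\log^L$, and absorbing the second term via $s\leq n^L$ lands you around $\log^L$, still comfortably within the stated $\log^{3L/2}$ but not $\log^{L/2}$. Second, you invoke "the slack in $n\geq C^{(2)}_L s^{1/L}\log^3(N/\alpha)$," but that is a hypothesis of Lemma~\ref{conditioningtheorem}, not of Lemma~\ref{BernsteinLemma2}, which assumes only $n^L\geq s$. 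The paper avoids needing any such lower bound by choosing a monotone threshold sequence $u_\ell$ whose starting value $u_1\approx v_{L-1}^{1/2}\log(4s/\alpha)$ is already as large as the final variance demands, so that $u_{\ell+1}^2\geq 4v_\ell\log(4s/\alpha)$ holds for all $\ell$ under $n^L\geq s$ alone. Your argument can be repaired to match (bound the residual term by $s^{1/2}n^{L/2}\log^{3L/2}$ using only $s\leq n^L$ and $\log\geq 1$), but as written it silently imports an unavailable assumption.
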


We prove Lemma~\ref{BernsteinLemma2} by an iterative application of both the complex Hoeffding and matrix Bernstein inequalities:

\begin{proposition}[matrix Bernstein, Theorem~1.4 in \cite{Tropp:12}]
\label{matrixbernstein}
Suppose $X_{1},\ldots, X_{n}$ are independent random $d\times d$ self-adjoint matrices with
\[
\mathbb{E}X_i=0
\qquad
\text{and}
\qquad
\|X_i\|_{2\to2}\leq K
\qquad
\text{almost surely}.
\]
Then for every $t\geq0$, it holds that
\[
\mathbb{P}\bigg\{\bigg\|\sum_{i=1}^{n}X_{i}\bigg\|_{2\to 2} > t\bigg\}
\leq 2d\operatorname{exp}\bigg(-\frac{t^{2}}{2\sigma^{2} + \frac{2}{3}Kt}\bigg),
\qquad
\sigma^{2} := \bigg\|\sum_{i=1}^{n}\mathbb{E}X_{i}^{2}\bigg\|_{2\to 2}.
\]
\end{proposition}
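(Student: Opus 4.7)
My plan is to prove the bound by induction on $\ell$, exploiting the recursive identity
\[M_\ell:=\sum_{i_1,\ldots,i_\ell\in[n]} D_{b^{(\ell)}_{i_\ell}}\cdots D_{b^{(1)}_{i_1}}X_0D_{b^{(1)}_{i_1}}^{*}\cdots D_{b^{(\ell)}_{i_\ell}}^{*}=\sum_{i_\ell=1}^{n} D_{b^{(\ell)}_{i_\ell}}M_{\ell-1}D_{b^{(\ell)}_{i_\ell}}^{*},\]
where $M_{\ell-1}$ is a matrix-valued function of $\{b^{(k)}_i\}_{k<\ell,\,i\in[n]}$ only. Two structural observations drive the argument: (a) $X_0=\mathbf{1}\mathbf{1}^{*}-I$ has zero diagonal, and conjugation by a diagonal unitary preserves zero diagonals, so $M_{\ell-1}$ has zero diagonal almost surely; and (b) since $N$ is prime and $b$ is uniform on $[N]$, the off-diagonal entries of $D_b M D_b^{*}$ are multiplied by $e_N(b(t'-t))$ with $t'-t\not\equiv 0\pmod N$, which averages to zero, so $\mathbb{E}[D_bMD_b^{*}\mid M]=\operatorname{diag}(M)$ for any $M\in\mathbb{C}^{T\times T}$. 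Combined, these facts show that conditionally on $\mathcal{F}_{\ell-1}:=\sigma(\{b^{(k)}_i:k<\ell\})$, the summands defining $M_\ell$ are independent, mean-zero, self-adjoint random matrices, exactly the setup of Proposition~\ref{matrixbernstein}.

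In the inductive step I would condition on $\mathcal{F}_{\ell-1}$ intersected with the good event from the previous stage, then apply matrix Bernstein with $K=\|D_bM_{\ell-1}D_b^{*}\|_{2\to 2}=\|M_{\ell-1}\|_{2\to 2}$, supplied by the inductive hypothesis, and $\sigma^2=\|\sum_{i_\ell}\mathbb{E}[(D_{b^{(\ell)}_{i_\ell}}M_{\ell-1}D_{b^{(\ell)}_{i_\ell}}^{*})^2\mid\mathcal{F}_{\ell-1}]\|_{2\to 2}$. Using $(D_bMD_b^{*})^2=D_bM^2D_b^{*}$ together with $\mathbb{E}[D_bM^2D_b^{*}\mid M]=\operatorname{diag}(M^2)$ reduces this to $\sigma^2=n\cdot\max_{t\in T}\sum_{t'\in T}|(M_{\ell-1})_{t,t'}|^2$. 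To control the maximum squared row $\ell_2$-norm of $M_{\ell-1}$, I would exploit its product form
\[(M_{\ell-1})_{t,t'}=\prod_{k=1}^{\ell-1}\sum_{i=1}^{n} e_N\bigl(b^{(k)}_i(t'-t)\bigr)\qquad (t\neq t')\]
and apply Proposition~\ref{prop.complex Hoeffding} to each of the $\ell-1$ factors, with union bounds over the at most $N^2$ differences $t'-t$ and the $L$ independent random-variable levels. This delivers a uniform entry estimate $|(M_{\ell-1})_{t,t'}|\leq C^{\ell-1}(n\log(N/\alpha))^{(\ell-1)/2}$ on a good event of probability at least $1-L\alpha$, and hence $\sigma^2\lesssim n^{\ell}s\log^{\ell-1}(N/\alpha)$.

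Feeding these into the Bernstein tail and solving $2s\exp(-t^2/(2\sigma^2+\tfrac{2}{3}Kt))\leq\alpha$, after balancing the subgaussian term $\sqrt{\sigma^2\log(s/\alpha)}$ against the subexponential term $K\log(s/\alpha)$, produces a bound of the form $\|M_\ell\|_{2\to 2}\lesssim n^{\ell/2}s^{1/2}\operatorname{polylog}(N/\alpha)$, which at $\ell=L$ matches the claimed estimate up to constants. The base case $\ell=1$ is a direct application of matrix Bernstein to $M_1=\sum_i(w_iw_i^{*}-I)$ with $w_i:=D_{b^{(1)}_i}\mathbf{1}$, using $K\leq 2s$ and $\sigma^2\leq ns$, and needs no preliminary Hoeffding step. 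A union bound across the $L$ Hoeffding events and the $L$ Bernstein applications yields the declared $L\alpha$ total failure probability. The main obstacle I anticipate is carefully tracking the accumulating polylog factors so the final count absorbs into $\log^{3L/2}(N/\alpha)$: each inductive step contributes one $\sqrt{\log}$ factor from the Bernstein tail on top of the $\log^{(\ell-1)/2}$ inherited from the iterated Hoeffding bounds inside $\sigma^2$, and I must check that the subexponential $K\log$ term never dominates the subgaussian estimate under the hypothesis $n\geq s^{1/L}$. A secondary point is that the conditional mean-zero computation in (b) hinges on primality of $N$ for exact cancellation of $\mathbb{E}[e_N(b(t'-t))]$ uniformly over all distinct $t,t'\in T$.
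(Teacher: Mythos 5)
Your proposal does not prove the statement it was assigned. The statement is the matrix Bernstein inequality itself (Proposition~\ref{matrixbernstein}, quoted from Theorem~1.4 of Tropp's paper): a tail bound for $\|\sum_i X_i\|_{2\to2}$ for arbitrary independent, mean-zero, uniformly bounded self-adjoint random matrices. What you have sketched instead is a proof of Lemma~\ref{BernsteinLemma2} --- the iterated conditioning argument for the specific matrices $D_{b^{(\ell)}_{i_\ell}}\cdots X_0\cdots D_{b^{(\ell)}_{i_\ell}}^{*}$ --- and your argument explicitly \emph{invokes} Proposition~\ref{matrixbernstein} as a black box at every inductive step. Relative to the assigned statement this is circular: you cannot establish the matrix Bernstein inequality by applying the matrix Bernstein inequality.

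A genuine proof of the assigned proposition looks nothing like your sketch. It proceeds by the matrix Laplace transform method: one bounds $\mathbb{P}\{\lambda_{\max}(\sum_i X_i)>t\}\leq e^{-\theta t}\,\mathbb{E}\operatorname{tr}\exp(\theta\sum_i X_i)$, uses Lieb's concavity theorem (or the subadditivity of matrix cumulant generating functions that it implies) to reduce the trace-exponential of the sum to the product structure $\operatorname{tr}\exp(\sum_i\log\mathbb{E}e^{\theta X_i})$, bounds each matrix moment generating function via $\log\mathbb{E}e^{\theta X_i}\preceq \frac{\theta^2/2}{1-\theta K/3}\,\mathbb{E}X_i^2$ for $0<\theta<3/K$, and finally optimizes over $\theta$; the two-sided operator-norm bound and the factor $2d$ follow by applying the one-sided eigenvalue bound to $\pm\sum_iX_i$. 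None of these ingredients appear in your proposal. (For what it is worth, the paper itself does not reprove this proposition either --- it cites it --- so the expected content here is the Tropp-style argument just outlined, not the application you wrote.)
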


\begin{proof}[Proof of Lemma~\ref{BernsteinLemma2}]
Iteratively define $X_{\ell} := \sum_{j=1}^{n}D_{b_{j}^{(\ell)}}X_{\ell-1}D_{b_{j}^{(\ell)}}^{*}$.
Then our task is to prove
\begin{equation}
\label{eq.desired result}
\mathbb{P}\{\|X_{\ell}\|_{2\to 2} > \tilde{C}^{(2)}_{L} n^{L/2}s^{1/2}\log^{3L/2}(N/\alpha)\}
\leq L\alpha
\end{equation}
for every $\ell\in [L]$ and $\alpha\in(0,L^{-1})$.
To accomplish this, first put
\begin{align*}
\sigma^{2}_{\ell}(X_\ell)
&:=\bigg\|\sum_{j=1}^{n}\mathbb{E}\Big[~(D_{b_{j}^{(\ell+1)}}X_{\ell}D_{b_{j}^{(\ell+1)}}^{*})^{2}~\Big|~X_{\ell}~\Big]\bigg\|_{2\to2}\\
&=\bigg\|\sum_{j=1}^{n}\mathbb{E}\Big[~D_{b_{j}^{(\ell+1)}}X_{\ell}^2D_{b_{j}^{(\ell+1)}}^{*}~\Big|~X_{\ell}~\Big]\bigg\|_{2\to2}
=\|n\cdot\operatorname{diag}(\operatorname{diag}(X_\ell^2))\|_{2\to2}
=n\max_{t\in T} \|X_\ell e_t\|_2^2,
\end{align*}
where the last step uses the fact that $X_\ell$ is self-adjoint.
Next, fix thresholds $u_{0},\ldots,u_{L} > 0$ and $v_{0},\ldots,v_{L-1} > 0$ (to be determined later), and define events
\[
A_{\ell}
:=\{\|X_{\ell}\|_{2\to 2} \leq u_{\ell}\},
\qquad
B_{\ell} 
:= \{\sigma^2_{\ell}(X_\ell) \leq v_\ell\},
\qquad
E_\ell:=\bigcap_{i=0}^\ell B_i,
\qquad
E:=E_{L-1}.
\]
In this notation, our task is to bound $\mathbb{P}(A_{\ell}^c)$ for each $\ell\in[L]$.

First, we take $v_0=ns$ so that $E_0^c=B_0^c$ is empty, i.e., $\mathbb{P}(E_{0}^c)=0$.
Now fix $\ell\geq0$ and suppose $\mathbb{P}(E_{\ell}^c)\leq \ell\alpha/2$.
Then we can condition on $E_\ell$, and
\begin{align}
\mathbb{P}(E_{\ell+1}^c)
\nonumber
&=\mathbb{P}(E_{\ell+1}^c|E_\ell)\mathbb{P}(E_\ell)+\mathbb{P}(E_{\ell+1}^c \cap E_\ell^c)\\
\label{eq.event stuff 2}
&\leq \mathbb{P}(E_{\ell+1}^c|E_\ell)+\mathbb{P}(E_\ell^c)
=\mathbb{P}(B_{\ell+1}^c|E_\ell)+\mathbb{P}(E_\ell^c).
\end{align}
Later, we will apply Proposition~\ref{prop.complex Hoeffding} to obtain the bound
\begin{equation}
\label{eq.event stuff 3}
\mathbb{P}(B_{\ell+1}^c|E_\ell)
\leq \alpha/2,
\end{equation}
which combined with \eqref{eq.event stuff 2} implies that $\mathbb{P}(E_{\ell+1}^c)
\leq (\ell+1)\alpha/2$.
By induction, we have
\begin{equation}
\label{eq.event stuff 100}
\mathbb{P}(E^c)
\leq L\alpha/2,
\end{equation}
and so we can condition on $E$.
Next, we take $u_0=s$ so that $A_0^c$ is empty, i.e., $\mathbb{P}(A_{0}^c | E)=0$.
Now fix $\ell\geq0$ and suppose $\mathbb{P}(A_{\ell}^c | E)\leq \ell\alpha/2$.
Then we can condition on $A_{\ell}\cap E$, and
\begin{align}
\mathbb{P}(A_{\ell+1}^c | E)
\nonumber
&=\mathbb{P}(A_{\ell+1}^c | A_{\ell}\cap E)\mathbb{P}(A_{\ell} | E) + \mathbb{P}(A_{\ell+1}^c \cap A_{\ell}^c | E)\\
\label{eq.event stuff 10}
&\leq \mathbb{P}(A_{\ell+1}^c | A_{\ell}\cap E) + \mathbb{P}(A_{\ell}^c | E).
\end{align}
Later, we will apply Proposition~\ref{matrixbernstein} to obtain the bound
\begin{equation}
\label{eq.event stuff 5}
\mathbb{P}(A_{\ell+1}^c | A_{\ell}\cap E)
\leq \alpha/2,
\end{equation}
which combined with \eqref{eq.event stuff 10} implies
\begin{equation}
\label{eq.event stuff 101}
\mathbb{P}(A_{\ell+1}^c | E)
\leq (\ell+1)\alpha/2.
\end{equation}
Combining \eqref{eq.event stuff 100} and \eqref{eq.event stuff 101} then gives
\[
\mathbb{P}(A_{\ell+1}^c) 
\nonumber
= \mathbb{P}(A_{\ell+1}^c | E)\mathbb{P}(E)+\mathbb{P}(A_{\ell+1}^c \cap E^c)
\leq \mathbb{P}(A_{\ell+1}^c | E)+\mathbb{P}(E^c)
\leq L\alpha,
\]
as desired.
Overall, to prove \eqref{eq.desired result}, it suffices to select thresholds
\begin{equation}
\label{eq.sequence of u}
0
<u_1
\leq u_2
\leq \cdots
\leq u_L
\leq \tilde{C}^{(2)}_{L} n^{L/2}s^{1/2}\log^{3L/2}(N/\alpha)
\end{equation}
and $v_1,\ldots,v_{L-1}>0$ in such a way that \eqref{eq.event stuff 3} and \eqref{eq.event stuff 5} hold simultaneously.

In what follows, we demonstrate \eqref{eq.event stuff 3}.
Let $\mathbb{P}_\ell$ denote the probability measure obtained by conditioning on the event $E_\ell$, and let $\mathbb{E}_\ell$ denote expectation with respect to this measure.
As we will see, Proposition~\ref{prop.complex Hoeffding} implies that $\mathbb{P}_\ell(B_{\ell+1}^c | X_\ell)\leq \alpha/2$ holds $\mathbb{P}_\ell$-almost surely, which in turn implies
\[
\mathbb{P}(B_{\ell+1}^c | E_{\ell})
=\mathbb{E}_\ell[ \mathbb{P}_\ell(B_{\ell+1}^c | X_\ell)]
\leq \alpha/2
\]
by the law of total probability.
For each $t\in T$ and $x\in[N]$, let $D'_{t,x}\in\mathbb{C}^{T\times T}$ denote the diagonal matrix whose $t$-th diagonal entry is $0$, and whose $u$-th diagonal entry is $e_N(-xu)$ for $u\in T\setminus\{t\}$.
In particular, $D'_{t,x}$ is identical to $D_x$, save the $t$-th diagonal entry.
Then since the $t$-th entry of $X_{\ell} e_t$ is $0$, we may write
\[
X_{\ell+1}e_t
= \sum_{j=1}^{n}D_{b_{j}^{(\ell+1)}}X_{\ell}D_{b_{j}^{(\ell+1)}}^{*}e_t
= \sum_{j=1}^{n} e_N(b_{j}^{(\ell+1)}t) \cdot D_{b_{j}^{(\ell+1)}}X_{\ell} e_t
= \sum_{j=1}^{n} e_N(b_{j}^{(\ell+1)}t) \cdot D'_{t,b_{j}^{(\ell+1)}}X_{\ell} e_t.
\]
We then take norms to get
\[
\|X_{\ell+1}e_t\|_2^2
\leq \bigg\|\sum_{j=1}^{n} e_N(b_{j}^{(\ell+1)}t) \cdot D'_{t,b_{j}^{(\ell+1)}}\bigg\|_{2\to2}^2\|X_{\ell} e_t\|_2^2
= \max_{u\in T\setminus\{t\}} \bigg|\sum_{j=1}^{n} e_N(b_{j}^{(\ell+1)}(t-u)) \bigg|^2\|X_{\ell} e_t\|_2^2.
\]
Recalling the definition of $\sigma^2_{\ell}(X_\ell)$, it follows that
\begin{align*}
\sigma^2_{\ell+1}(X_{\ell+1})
&=n\cdot\max_{t\in T}\|X_{\ell+1}e_t\|_2^2\\
&\leq \sigma^2_{\ell}(X_\ell)\max_{\substack{t,u\in T\\t\neq u}} \bigg|\sum_{j=1}^{n} e_N(b_{j}^{(\ell+1)}(t-u)) \bigg|^2
\leq v_\ell \max_{\substack{t,u\in T\\t\neq u}} \bigg|\sum_{j=1}^{n} e_N(b_{j}^{(\ell+1)}(t-u)) \bigg|^2,
\end{align*}
where the last step holds $\mathbb{P}_\ell$-almost surely since $B_\ell\supseteq E_\ell$.
As such, the following bound holds $\mathbb{P}_\ell$-almost surely:
\begin{align*}
\mathbb{P}_\ell(B_{\ell+1}^c | X_\ell)
=\mathbb{P}_\ell(\{\sigma_{\ell+1}^2(X_{\ell+1}) > v_{\ell+1}\} | X_\ell)
&\leq\mathbb{P}_\ell\bigg\{\max_{\substack{t,u\in T\\t\neq u}} \bigg|\sum_{j=1}^{n} e_N(b_{j}^{(\ell+1)}(t-u)) \bigg|^2 > \frac{v_{\ell+1}}{v_\ell} \bigg\}\\
&\leq \sum_{\substack{t,u\in T\\t\neq u}}\mathbb{P}_\ell\bigg\{ \bigg|\sum_{j=1}^{n} e_N(b_{j}^{(\ell+1)}(t-u)) \bigg|^2 > \frac{v_{\ell+1}}{v_\ell} \bigg\},
\end{align*}
where the last step applies the union bound.
Next, for each $t,u\in T$ with $t\neq u$, it holds that $\{e_N(b_{j}^{(\ell+1)}(t-u))\}_{j\in[n]}$ are $\mathbb{P}_\ell$-independent complex random variables with zero mean and unit modulus $\mathbb{P}_\ell$-almost surely.
We may therefore apply Proposition~\ref{prop.complex Hoeffding} to continue:
\[
\mathbb{P}_\ell(B_{\ell+1}^c | X_\ell)
\leq 4s^2 \operatorname{exp}\bigg(-\frac{v_{\ell+1}}{2nv_\ell}\bigg)
\qquad
\text{$\mathbb{P}_\ell$-almost surely}.
\]
As such, selecting $v_{\ell+1}:=2n\log(8s^2/\alpha)\cdot v_\ell$ ensures that $\mathbb{P}_\ell(B_{\ell+1}^c | X_\ell)\leq \alpha/2$ holds $\mathbb{P}_\ell$-almost surely, as desired.

Next, we demonstrate \eqref{eq.event stuff 5}.
For convenience, we change the meanings of $\mathbb{P}_\ell$ and $\mathbb{E}_\ell$:
Let $\mathbb{P}_\ell$ denote the probability measure obtained by conditioning on the event $A_\ell\cap E$, and let $\mathbb{E}_\ell$ denote expectation with respect to this measure.
As we will see, Proposition~\ref{matrixbernstein} implies that $\mathbb{P}_\ell(A_{\ell+1}^c | X_\ell)\leq \alpha/2$ holds $\mathbb{P}_\ell$-almost surely, which in turn implies
\[
\mathbb{P}(A_{\ell+1}^c | A_{\ell}\cap E)
=\mathbb{E}_\ell[ \mathbb{P}_\ell(A_{\ell+1}^c | X_\ell)]
\leq \alpha/2
\]
by the law of total probability.
Let $\mathcal{X}\subseteq\mathbb{C}^{T\times T}$ denote the support of the discrete distribution of $X_\ell$ under $\mathbb{P}_\ell$.
For each $x\in\mathcal{X}$, let $\mathbb{P}_{\ell,x}$ denote the probability measure obtained by conditioning $\mathbb{P}_\ell$ on the event $\{X_\ell=x\}$.
Then $\{D_{b_{j}^{(\ell+1)}}X_\ell D_{b_{j}^{(\ell+1)}}^{*}\}_{j\in[n]}$ are $\mathbb{P}_{\ell,x}$-independent random $s\times s$ self-adjoint matrices with
\[
\mathbb{E}_{\ell,x}D_{b_{j}^{(\ell+1)}}X_{\ell}D_{b_{j}^{(\ell+1)}}^{*}=0,
\qquad
\|D_{b_{j}^{(\ell+1)}}X_{\ell}D_{b_{j}^{(\ell+1)}}^{*}\|_{2\to2}\leq \|x\|_{2\to2}
\qquad
\text{$\mathbb{P}_{\ell,x}$-almost surely},
\]
and so Proposition~\ref{matrixbernstein} gives
\begin{align*}
\mathbb{P}_\ell(A_{\ell+1}^c | \{X_\ell=x\} )
&=\mathbb{P}_\ell(\{\|X_{\ell+1}\|_{2\to2}> u_{\ell+1}\} | \{X_\ell=x\} )\\
&=\mathbb{P}_{\ell,x}\bigg\{\bigg\|\sum_{j=1}^n D_{b_{j}^{(\ell+1)}}X_\ell D_{b_{j}^{(\ell+1)}}^{*} \bigg\|_{2\to2} > u_{\ell+1} \bigg\}\\
&\leq 2s\operatorname{exp}\bigg(-\frac{u_{\ell+1}^2}{2\sigma_\ell^2(x)+\frac{2}{3}\|x\|_{2\to2}u_{\ell+1}}\bigg)
\leq 2s\operatorname{exp}\bigg(-\frac{u_{\ell+1}^2}{2v_\ell+\frac{2}{3}u_\ell u_{\ell+1}}\bigg),
\end{align*}
where the last step follows from the fact that $x$ resides in the support of the $\mathbb{P}_\ell$-distribution of $X_\ell$, and furthermore $\sigma^2_\ell(X_\ell)\leq v_\ell$ holds over $B_\ell \supseteq A_\ell\cap E$ and $\|X_\ell\|_{2\to2} \leq u_\ell$ holds over $A_\ell \supseteq A_\ell\cap E$.
Thus, 
\[
\mathbb{P}_\ell(A_{\ell+1}^c | X_\ell )
\leq 2s\operatorname{exp}\bigg(-\frac{u_{\ell+1}^2}{2v_\ell+\frac{2}{3}u_\ell u_{\ell+1}}\bigg)
\leq 2s\operatorname{exp}\bigg(-\frac{1}{4}\min\bigg\{\frac{u_{\ell+1}^2}{v_{\ell}},\frac{3u_{\ell+1}}{u_\ell}\bigg\}\bigg)
\]
holds $\mathbb{P}_\ell$-almost surely, and so we select $u_{\ell+1}$ so that the right-hand side is at most $\alpha/2$.
One may show that it suffices to take $u_1:=v_{L-1}^{1/2}\cdot 4\log(4s/\alpha)$ and $u_{\ell+1}=u_\ell\cdot(4/3)\log(4s/\alpha)$ for $\ell\geq 1$.
This choice satisfies \eqref{eq.sequence of u}, from which the result follows.
\end{proof}

\begin{proof}[Proof of Lemma~\ref{conditioningtheorem}]
For indices $i_{1},\ldots,i_{\ell}\in [n]$ we define
\[
H(b_{i_{1}},\ldots,b_{i_{\ell}})
:=H_{i_{1}\cdots i_{\ell}}
:= D_{b_{i_\ell}}\cdots D_{b_{i_1}}X_{0}D_{b_{i_1}}^*\cdots D_{b_{i_\ell}}^*.
\]
Notice that $H(\cdot)$ is a deterministic function, but $H_{i_{1}\cdots i_{\ell}}$ is random since $b_{i_{1}},\ldots,b_{i_{\ell}}$ are random.
Observe that we may decompose $A_{T}^{*}A_{T} - I$ as 
\[
A_{T}^{*}A_{T} - I
= \frac{1}{n^L}\sum_{i_{1},\ldots, i_{L}\in [n]} H_{i_{1}\cdots i_{L}}.
\]
For each $\ell\in[L]$, it will be convenient to partition $[n]^\ell=\mathcal{I}_{n}^{\ell}\sqcup\mathcal{J}_{n}^{\ell}$, where $\mathcal{I}_{n}^{\ell}$ denotes the elements with distinct entries and $\mathcal{J}_{n}^{\ell}$ denotes the elements that do not have distinct entries.
In addition, we let $S([L],\ell)$ denote the set of partitions of $[L]$ into $\ell$ nonempty sets, and we put $S(L,\ell):=|S([L],\ell)|$.
To analyze the above sum, we will relate $A_{T}^{*}A_{T} - I$ to a sum indexed over $\mathcal{I}_{n}^{\ell}$.  
Specifically, every tuple $(i_{1},\ldots, i_{L})\in [n]^{L}$ is uniquely determined by three features:
\begin{itemize}
\item
the number $\ell\in [L]$ of distinct entries in the tuple,
\item
a partition $S\in S([L],\ell)$ of the tuple indices into $\ell$ nonempty sets, and
\item
a tuple $(j_{1},\ldots, j_{\ell})\in \mathcal{I}_{n}^{\ell}$ of distinct elements of $[n]$.   
\end{itemize}
To be clear, we order the members of $S=\{S_1,\ldots,S_\ell\}$ lexicographically so that $i_k=j_l$ for every $k\in S_l$.
For every $S\in S([L],\ell)$, we may therefore abuse notation by considering the function $S\colon[n]^\ell\to[n]^L$ defined by $S(j_{1},\ldots, j_{\ell}) = (i_{1},\ldots,i_{L})$.
It will also be helpful to denote $\sigma\colon[L]\to[\ell]$ such that for every $i\in[L]$, it holds that $i\in S_{\sigma(i)}$, that is, $i$ resides in the $\sigma(i)$-th member of the partition $S$.
This gives
\[
A_{T}^{*}A_{T} - I
=\frac{1}{n^L} \sum_{i_{1},\ldots, i_{L}\in [n]} H_{i_{1}\cdots i_{L}}
=\frac{1}{n^L} \sum_{\ell=1}^{L}\sum_{S\in S([L],\ell)} \sum_{(j_{1},\ldots,j_{\ell})\in \mathcal{I}_{n}^{\ell}} H_{S(j_{1},\ldots,j_{\ell})}.
\]    
The triangle inequality and union bound then give
\begin{align*}
E
:=\mathbb{P}\bigg\{\|A_{T}^{*}A_{T} - I\|_{2\rightarrow 2} > \frac{1}{e}\bigg\}
&= \mathbb{P}\bigg\{\bigg\|\sum_{\ell=1}^{L}\sum_{S\in S([L],\ell)} \sum_{(j_{1},\ldots,j_{\ell})\in \mathcal{I}_{n}^{\ell}} H_{S(j_{1},\ldots,j_{\ell})}\bigg\|_{2\to2} > \frac{n^{L}}{e}\bigg\}\\
&\leq \mathbb{P}\bigg\{\sum_{\ell=1}^{L}\sum_{S\in S([L],\ell)} \bigg\|\sum_{(j_{1},\ldots,j_{\ell})\in \mathcal{I}_{n}^{\ell}} H_{S(j_{1},\ldots,j_{\ell})}\bigg\|_{2\to2} > \frac{n^{L}}{e}\bigg\}\\
&\leq \sum_{\ell=1}^{L}\sum_{S\in S([L],\ell)} \mathbb{P}\bigg\{\bigg\|\sum_{(j_{1},\ldots,j_{\ell})\in \mathcal{I}_{n}^{\ell}} H_{S(j_{1},\ldots, j_{\ell})}\bigg\|_{2\to2} > \frac{n^{L}}{eL2^{L^2}}\bigg\},
\end{align*}
where the last inequality uses the simple bound $S(L,\ell)\leq 2^{L^2}$. 

To proceed, we will apply Proposition~\ref{decouplingtheorem} with $k=\ell$, random variables $X_i=b_{i}$ for $i\in[n]$, measurable space $\mathcal{S}=[N]$, separable Banach space $B=(\mathbb{C}^{T\times T},\|\cdot\|_{2\to2})$, and measurable function $h_S\colon[N]^\ell\to\mathbb{C}^{T\times T}$ defined by
\[
h_S(x_1,\ldots,x_\ell)
:=H(x_{\sigma(1)},\ldots,x_{\sigma(L)})
=H(s_1x_1,\ldots,s_\ell x_\ell),
\]
where $s_k:=|S_k|$ for $k\in[\ell]$.
Observe that $H_{S(j_1,\ldots, j_\ell)}=h_S(b_{j_1},\ldots,b_{j_\ell})$.
As such, we may continue our bound:
\begin{align*}
E
&\leq \sum_{\ell=1}^{L}\sum_{S\in S([L],\ell)} \mathbb{P}\bigg\{\bigg\|\sum_{(j_{1},\ldots,j_{\ell})\in \mathcal{I}_{n}^{\ell}} H_{S(j_{1},\ldots, j_{\ell})}\bigg\|_{2\to2} > \frac{n^{L}}{eL2^{L^2}}\bigg\}\\
&= \sum_{\ell=1}^{L}\sum_{S\in S([L],\ell)} \mathbb{P}\bigg\{\bigg\|\sum_{(j_{1},\ldots,j_{\ell})\in \mathcal{I}_{n}^{\ell}} h_S(b_{j_1},\ldots,b_{j_\ell}) \bigg\|_{2\to2} > \frac{n^{L}}{eL2^{L^2}}\bigg\}\\
&\leq \sum_{\ell=1}^{L}\sum_{S\in S([L],\ell)} C_\ell \cdot \mathbb{P}\bigg\{C_\ell\bigg\|\sum_{(j_{1},\ldots,j_{\ell})\in \mathcal{I}_{n}^{\ell}} h_S(b_{j_1}^{(1)},\ldots,b_{j_\ell}^{(\ell)}) \bigg\|_{2\to2} > \frac{n^{L}}{eL2^{L^2}}\bigg\}\\
&= \sum_{\ell=1}^{L}\sum_{S\in S([L],\ell)} C_\ell \cdot \mathbb{P}\bigg\{C_\ell\bigg\|\sum_{(j_{1},\ldots,j_{\ell})\in \mathcal{I}_{n}^{\ell}} H(s_1b_{j_1}^{(1)},\ldots,s_\ell b_{j_\ell}^{(\ell)}) \bigg\|_{2\to2} > \frac{n^{L}}{eL2^{L^2}}\bigg\}.
\end{align*}
Since $L<N$, then for each $S\in S([L],\ell)$, we have $s_i\in(0,N)$ for every $i\in[\ell]$.
Also, $N$ is prime, and so the mapping $M\colon\mathbb{F}_N^{\ell\times n}\to\mathbb{F}_N^{\ell \times n}$ defined by $M(X)=\operatorname{diag}(s_1,\ldots,s_\ell)\cdot X$ is invertible.
Thus, $(s_ib_j^{(i)})_{i\in[\ell],j\in[n]}$ has the same (uniform) distribution as $(b_j^{(i)})_{i\in[\ell],j\in[n]}$.
Put
\[
H'_{j_{1}\cdots j_{\ell}}
:= D_{b^{(\ell)}_{j_\ell}}\cdots D_{b^{(1)}_{j_1}}X_{0}D_{b^{(1)}_{j_1}}^{*}\cdots D_{b^{(\ell)}_{j_\ell}}^{*}.
\]
Denoting $C:=\max_{\ell\leq L}C_\ell$ in terms of the absolute constants of Proposition~\ref{decouplingtheorem}, then
\[
E
\leq 2^{L^2}C\cdot\sum_{\ell=1}^{L}\mathbb{P}\bigg\{\bigg\|\sum_{(j_{1},\ldots,j_{\ell})\in \mathcal{I}^{\ell}_{n}} H'_{j_{1}\cdots j_{\ell}}\bigg\|_{2\to2} > \frac{n^{L}}{CeL2^{L^2}}\bigg\}.
\]

We will use Lemma~\ref{BernsteinLemma2} to bound each of the probabilities in the above sum.
However, the sums in Lemma~\ref{BernsteinLemma2} are indexed by $[n]^{\ell}$ instead of $\mathcal{I}_{n}^{\ell}$.  
To close this gap, we perform another sequence of union bounds.  
First, 
\begin{align}
E
\nonumber
&\leq 2^{L^2}C\cdot \sum_{\ell=1}^{L}\mathbb{P}\bigg\{ \bigg\| \sum_{j_{1},\ldots,j_{\ell}\in [n]} H'_{j_{1}\cdots j_{\ell}} - \sum_{(j_{1},\ldots, j_{\ell})\in \mathcal{J}^{\ell}_{n}} H_{j_{1}\cdots j_{\ell}}' \bigg\|_{2\to2} > \frac{n^{L}}{CeL2^{L^2}}\bigg\}\\
\label{eq.first part of this}
&\leq 2^{L^2}C\cdot \sum_{\ell=1}^{L}\mathbb{P}\bigg\{ \bigg\| \sum_{j_{1},\ldots,j_{\ell}\in [n]} H'_{j_{1}\cdots j_{\ell}} \bigg\|_{2\to2} > \frac{n^{L}}{2CeL2^{L^2}} \bigg\} \\
\label{eq.next part of this}
&\qquad + 2^{L^2}C\cdot \sum_{\ell=1}^{L}\mathbb{P}\bigg\{ \bigg\|\sum_{(j_{1},\ldots, j_{\ell})\in \mathcal{J}^{\ell}_{n}} H'_{j_{1}\cdots j_{\ell}}\bigg\|_{2\to2}  > \frac{n^{L}}{2CeL2^{L^2}}\bigg\}.
\end{align}
Next, we will decompose the sum over $\mathcal{J}_{n}^{\ell}$ into multiples of sums of over $[n]^{1},\ldots, [n]^{\ell-1}$.  
Each $(j_{1},\ldots,j_{\ell})\in \mathcal{J}_{n}^{\ell}$ induces a partition of $[\ell]$ into at most $\ell-1$ parts.  
We may rewrite this sum by first summing over each partition $S$ into $\ell-1$ parts and then over each assignment of elements to this partition, represented by a member of $[n]^{\ell-1}$.  
That is, we may consider
\[
\sum_{S\in S([\ell],\ell-1)} \sum_{i_{1},\ldots,i_{\ell-1}\in [n]} H'_{S(i_{1},\ldots,i_{\ell-1})}.
\]
The above expression will include $H'_{j_{1}\cdots j_{\ell}}$ exactly once for each $(j_{1},\ldots,j_{\ell})$ that induces a partition of $[\ell]$ into exactly $\ell-1$ parts.  
However, since the indices $(i_{1},\ldots,i_{\ell-1})\in [n]^{\ell-1}$ need not be distinct, for each partition into more than $\ell-1$ parts, the above sum will overcount the corresponding terms $H'_{j_{1}\cdots j_{\ell}}$.  
To remedy this, we subtract the appropriate multiple of the sum over all partitions into $\ell-2$ parts.  
Then, having subtracted too many terms of the form $H_{j_{1}\cdots j_{\ell}}$ corresponding to partitions into $\ell-3$ parts, we may add back the appropriate number of such terms.  
Iterating this process leads to a decomposition of the form
\[
\sum_{(j_{1},\ldots, j_{\ell})\in \mathcal{J}^{\ell}_{n}} H'_{j_{1}\cdots j_{\ell}}
= \sum_{j=1}^{\ell-1}(-1)^{\ell-1-j}\sum_{\substack{1\leq s_{1},\ldots,s_{j} \leq \ell \\ s_{1} + \cdots + s_{j} = \ell}} c_{s_{1},\ldots,s_{j}}\sum_{\substack{S\in S([\ell],j)\\ |S_1|=s_1,\ldots, |S_j|=s_j}} \sum_{i_{1},\ldots,i_{j}\in [n]} H'_{S(i_{1},\ldots,i_{j})}.
\]
For our purposes, it suffices to observe that $c_{s_{1},\ldots,s_{j}}$ is some positive integer crudely bounded by $2^{L^3}$; indeed, $2^{L^2}$ is a bound on the number of partitions of $[\ell]$, and there are at most $L-1$ steps in the iterative process above, so
\[
c_{s_{1},\ldots,s_{j}}
\leq 1+2^{L^2}+2^{2L^2}+\cdots+2^{(L-1)L^2}
\leq 2^{L^3}.
\]
The triangle inequality then gives
\begin{align*}
\bigg\|\sum_{(j_{1},\ldots, j_{\ell})\in \mathcal{J}^{\ell}_{n}} H'_{j_{1}\cdots j_{\ell}}\bigg\|_{2\to2}
&\leq \sum_{j=1}^{\ell-1}\sum_{\substack{1\leq s_{1},\ldots,s_{j} \leq \ell \\ s_{1} + \cdots + s_{j} = \ell}} \sum_{\substack{S\in S([\ell],j)\\ |S_1|=s_1,\ldots, |S_j|=s_j}}  c_{s_{1},\ldots,s_{j}} \bigg\| \sum_{i_{1},\ldots,i_{j}\in [n]} H'_{S(i_{1},\ldots,i_{j})}\bigg\|_{2\to2}\\
&\leq 2^{L^3} \cdot \sum_{j=1}^{\ell-1} \sum_{S\in S([\ell],j)}\bigg\| \sum_{i_{1},\ldots,i_{j}\in [n]} H'_{S(i_{1},\ldots,i_{j})}\bigg\|_{2\to2}.
\end{align*}
Next, we apply a union bound:
\begin{align*}
\eqref{eq.next part of this}
&= 2^{L^2}C\cdot \sum_{\ell=1}^{L} \mathbb{P}\bigg\{ \bigg\|\sum_{(j_{1},\ldots, j_{\ell})\in \mathcal{J}^{\ell}_{n}} H'_{j_{1}\cdots j_{\ell}}\bigg\|_{2\to2}  > \frac{n^{L}}{2CeL2^{L^2}}\bigg\}\\
&\leq 2^{L^2}C \cdot\sum_{\ell=1}^{L} \mathbb{P}\bigg\{ \sum_{j=1}^{\ell-1}\sum_{S\in S([\ell],j)} \bigg\| \sum_{i_{1},\ldots,i_{j}\in [n]} H'_{S(i_{1},\ldots,i_{j})}\bigg\|_{2\to2}  > \frac{n^{L}}{2CeL2^{2L^3}}\bigg\}\\
&\leq 2^{L^2}C \cdot \sum_{\ell=1}^{L} \sum_{j=1}^{\ell-1}\sum_{S\in S([\ell],j)} \mathbb{P}\bigg\{  \bigg\| \sum_{i_{1},\ldots,i_{j}\in [n]} H'_{S(i_{1},\ldots,i_{j})}\bigg\|_{2\to2}  > \frac{n^{L}}{2CeL^22^{3L^3}}\bigg\}
\end{align*}
Let $F\in\mathbb{F}_N^{j\times \ell}$ denote the matrix whose $i$th row is the indicator vector of $S_i$.
Since the $S_i$'s are nonempty and partition $[\ell]$, it follows that $F$ has rank $j$.
Define the mapping $M\colon\mathbb{F}_N^{\ell\times n}\to\mathbb{F}_N^{j \times n}$ by $M(X)=FX$.
Then $M$ is surjective, and so $M((b_k^{(i)})_{i\in[\ell],k\in[n]})$ has the same (uniform) distribution as $(b_k^{(i)})_{i\in[j],k\in[n]}$.
Next, we take $g\colon \mathbb{F}_N^{j \times n}\to(\mathbb{C}^{T\times T})^{[n]^j}$ such that $g((x_k^{(i)})_{i\in[j],k\in[n]})_{i_1,\ldots,i_j}=H(x_{i_1}^{(1)},\ldots,x_{i_j}^{(j)})$.
Then $(H'_{S(i_{1},\ldots,i_{j})})_{i_1,\ldots,i_j\in[n]}=g(M((b_k^{(i)})_{i\in[\ell],k\in[n]}))$ has the same distribution as $(H'_{i_{1}\cdots i_{j}})_{i_1,\ldots,i_j\in[n]}=g((b_k^{(i)})_{i\in[j],k\in[n]})$.
With this, we continue:
\begin{align}
\eqref{eq.next part of this}
\nonumber
&\leq 2^{L^2}C \cdot \sum_{\ell=1}^{L} \sum_{j=1}^{\ell-1}\sum_{S\in S([\ell],j)} \mathbb{P}\bigg\{  \bigg\| \sum_{i_{1},\ldots,i_{j}\in [n]} H'_{S(i_{1},\ldots,i_{j})}\bigg\|_{2\to2}  > \frac{n^{L}}{2CeL^22^{3L^3}}\bigg\}\\
\label{eq.ready to bound}
&=2^{L^2}C \cdot \sum_{\ell=1}^{L} \sum_{j=1}^{\ell-1}\sum_{S\in S([\ell],j)} \mathbb{P}\bigg\{  \bigg\| \sum_{i_{1},\ldots,i_{j}\in [n]} H'_{i_{1}\cdots i_{j}}\bigg\|_{2\to2}  > \frac{n^{L}}{2CeL^22^{3L^3}}\bigg\}.
\end{align}
Finally, pick $\alpha:=\epsilon/(6CL^32^{2L^2})$ and take $C^{(2)}_L$ large enough so that
\[
n^{L/2}
\geq (C^{(2)}_L)^{L/2} s^{1/2}\log^{3L/2}(N/\epsilon)
\geq 2CeL^2 2^{3L^3} \cdot \tilde{C}^{(2)}_L s^{1/2}\log^{3L/2}(N/\alpha).
\]
Then by Lemma~\ref{BernsteinLemma2}, each of the terms in \eqref{eq.first part of this} and \eqref{eq.ready to bound} are at most $L\alpha$, and so
\[
E
\leq\eqref{eq.first part of this}+\eqref{eq.ready to bound}
\leq2^{L^2}C\cdot L\cdot L\alpha + 2^{L^2}C\cdot L\cdot L\cdot 2^{L^2} \cdot L\alpha
\leq 2CL^32^{2L^2}\alpha
= \epsilon/3.
\qedhere
\]
\end{proof}

\section{Proof of Lemma~\ref{lem.moment hell}}

We will use the following lemma, whose proof we save for later.

\begin{lemma}
\label{lem.linear independence}
Fix $u\in\mathbb{R}$ and $j\colon[k+1]\times[2M]\times[L]\to[n]$ with image of size $m$.
Then at least $\lceil m/L\rceil$ of the following $n$ linear constraints on $\ell\in\mathbb{R}^{(k+1)\times 2M}$ are linearly independent:
\[
\sum_{(h,p,q)\in j^{-1}(i)}(-1)^p(\ell_{h,p}-\ell_{h-1,p})=0,
\qquad
i\in[n],
\]
where $\ell_{0,p}:=u$ for every $p\in[2M]$.
\end{lemma}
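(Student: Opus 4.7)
The plan is to reduce this to a routine rank-versus-support bound by a change of variables that eliminates both the constant $u$ and the sign pattern $(-1)^p$ simultaneously. I would introduce
\[
\Delta_{h,p} := (-1)^p(\ell_{h,p} - \ell_{h-1,p})
\qquad \text{for } h\in[k+1],\; p\in[2M].
\]
Inverting via $\ell_{h,p} = u + (-1)^p\sum_{h'=1}^{h} \Delta_{h',p}$ shows that $\ell \leftrightarrow \Delta$ is a linear bijection on $\mathbb{R}^{(k+1)\times 2M}$ with $u$ held fixed as a parameter, so linear independence among the constraints is preserved. In the new variables, the $i$-th constraint collapses to the homogeneous linear form
\[
L_i(\Delta) = \sum_{(h,p,q)\in j^{-1}(i)} \Delta_{h,p},
\]
whose coefficient matrix $C$ has entries $C_{i,(h,p)} = |\{q \in [L] : j(h,p,q) = i\}|$.

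The claim therefore reduces to showing $\mathrm{rank}(C) \geq \lceil m/L \rceil$, since $C$ has exactly $m$ nonzero rows (indexed by the image of $j$), and the rank counts the number of linearly independent forms $L_i$. The key structural observation is that column $(h,p)$ of $C$ has support $\{j(h,p,q) : q \in [L]\}$, of size at most $L$.

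To finish, I would run a short dimension argument. Let $r := \mathrm{rank}(C)$ and pick $r$ columns $c_1,\ldots,c_r$ of $C$ that span its column space. Every column of $C$ is a linear combination of these, so its support is contained in $S := \bigcup_{t=1}^{r} \mathrm{supp}(c_t)$, whose cardinality is at most $rL$. But the union of supports of all columns is exactly the set of nonzero rows of $C$, which has size $m$. Hence $m \leq rL$, and since $r$ is an integer this gives $r \geq \lceil m/L\rceil$.

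The only step requiring any insight is the initial change of variables; everything else is mechanical. I do not anticipate a serious obstacle, but the main pitfall would be the temptation to work directly with $\ell_{h,p}$, where the $u$-term and the alternating signs obscure the clean column-support structure that drives the proof.
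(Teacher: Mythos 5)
Your proof is correct.  The change of variables $\Delta_{h,p}=(-1)^p(\ell_{h,p}-\ell_{h-1,p})$ is exactly the same move the paper makes (the paper phrases it as a change of basis $\{b_{h,p}\}$ of $\mathbb{R}^{(k+1)\times 2M}$ so that the $i$-th constraint becomes $\langle\ell,\sum_{(h,p,q)\in j^{-1}(i)}b_{h,p}\rangle=\text{const}$).  Where you genuinely diverge is in how the bound $\lceil m/L\rceil$ is extracted.  The paper runs a greedy selection: it iteratively picks a coordinate $(h_t,p_t)$ whose $j$-image still meets the remaining set $R\subseteq\operatorname{im}(j)$, takes $i_t$ from that intersection, removes $\{j(h_t,p_t,q):q\in[L]\}$ from $R$ (at most $L$ elements per step, hence at least $\lceil m/L\rceil$ steps), and then argues the chosen constraint vectors are linearly independent by a triangularization argument.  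You instead observe that the coefficient matrix $C$ with $C_{i,(h,p)}=|\{q:j(h,p,q)=i\}|$ has every column supported on at most $L$ rows and exactly $m$ nonzero rows in total; choosing $r=\operatorname{rank}(C)$ spanning columns, whose support union has size at most $rL$, and noting that every column's support lies inside that union forces $m\leq rL$.  This is a slicker, non-constructive pigeonhole on column supports and avoids having to track the ordering needed for the paper's triangularization step (which is stated a bit loosely in the paper---the chosen constraint for $i_t$ is really the \emph{last}, not the first, in the greedy order to pick up a $b_{h_t,p_t}$ contribution, since later $i_u$'s are excluded from $\{j(h_t,p_t,q):q\}$ by the removal step).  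Your argument is cleaner; the paper's has the minor advantage of being constructive.
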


\begin{proof}[Proof of Lemma~\ref{lem.moment hell}]
For now, we assume that $p$ takes the form $p=2M$ for some $M\in\mathbb{N}$, and later we interpolate with Littlewood's inequality.
As such, we will bound the expectation of
\begin{equation}
\label{eq.want the moments}
|a_u^*A_T(I-A_T^*A_T)^k z|^{2M}.
\end{equation}
In what follows, we write \eqref{eq.want the moments} as a large sum of products, and then we will take the expectation of each product in the sum.
First, we denote $\ell_0:=u$ and observe
\[
a_u^*A_T(I-A_T^*A_T)^k z
=-\sum_{\ell_1,\ldots,\ell_{k+1}\in T}\prod_{h=1}^{k+1}(I-A^*A)_{\ell_{h-1},\ell_h}z_{\ell_{k+1}}.
\]
Since $A$ has unit-norm columns, the diagonal entries of $I-A^*A$ are all zero, and so we can impose the constraint that $\ell_h\neq \ell_{h+1}$ for every $h\in[k]$.
For the moment, we let $T_*^{k+1}$ denote the set of all $(k+1)$-tuples $\ell$ of members of $T$ that satisfy this constraint.
Then
\[
a_u^*A_T(I-A_T^*A_T)^k z
=-\sum_{\ell\in T_*^{k+1}}\prod_{h=1}^{k+1}(I-A^*A)_{\ell_{h-1},\ell_h}z_{\ell_{k+1}}.
\]
Next, we take the squared modulus of this quantity and raise it to the $M$-th power.
To do so, it will be convenient to adopt the following notation to keep track of complex conjugation:
Write $\overline{x}^{(p)}$ to denote $\overline{x}$ if $p$ is odd and $x$ if $p$ is even.
Then
\begin{align*}
|a_u^*A_T(I-A_T^*A_T)^k z|^{2M}
&=\bigg|-\sum_{\ell\in T_*^{k+1}}\prod_{h=1}^{k+1}(I-A^*A)_{\ell_{h-1},\ell_h}z_{\ell_{k+1}}\bigg|^{2M}\\
&=\sum_{\ell^{(1)},\ldots,\ell^{(2M)}\in T_*^{k+1}}\prod_{p=1}^{2M}\prod_{h=1}^{k+1}\overline{(I-A^*A)_{\ell_{h-1,p},\ell_{h,p}}z_{\ell_{k+1,p}}}^{(p)},
\end{align*}
where $\ell_{0,p}=u$ and $\ell_{h,p}$ denotes the $h$-th entry of $\ell^{(p)}$ for every $p\in[2M]$.
It is helpful to think of each tuple $(\ell^{(1)},\ldots,\ell^{(2M)})\in(T_*^{k+1})^{2M}$ as a matrix $\ell\in T^{(k+1)\times2M}$ satisfying constraints of the form $\ell_{h-1,p}\neq\ell_{h,p}$.
Let $\mathcal{L}$ denote the set of these matrices.
Next, we write out the entries of $I-A^*A$ in terms of the entries of $A$:
\[
(I-A^*A)_{\ell_{h-1,p},\ell_{h,p}}
=-(A^*A)_{\ell_{h-1,p},\ell_{h,p}}
=-\frac{1}{n^L}\sum_{j_1,\ldots,j_L\in[n]} e_N((\ell_{h,p}-\ell_{h-1,p})(b_{j_1}+\cdots+b_{j_L})).
\]
After applying the distributive law, it will be useful to index with $j_{h,p,1},\ldots,j_{h,p,L}\in[n]$ in place of $j_1,\ldots,j_L\in[n]$ above, and in order to accomplish this, we let $\mathcal{J}$ denote the set of functions $j\colon[k+1]\times [2M]\times [L]\to[n]$.
We also collect the entries of $z$ by writing $z_\ell:=\prod_{p=1}^{2M}\prod_{h=1}^{k+1}\overline{z_{\ell_{k+1,p}}}^{(p)}$.
Then the distributive law gives
\begin{align*}
|a_u^*A_T(I-A_T^*A_T)^k z|^{2M}
&=\sum_{\ell\in\mathcal{L}} z_\ell \prod_{p=1}^{2M}\prod_{h=1}^{k+1}\frac{1}{n^L}\sum_{j_1,\ldots,j_L\in[n]}e_N((-1)^p(\ell_{h,p}-\ell_{h-1,p})(b_{j_{1}}+\cdots+b_{j_{L}}))\\
&=\frac{1}{n^{2ML(k+1)}}\sum_{\ell\in\mathcal{L}} z_\ell \sum_{j\in\mathcal{J}}\prod_{p=1}^{2M}\prod_{h=1}^{k+1} e_N\bigg(\sum_{q=1}^L(-1)^p(\ell_{h,p}-\ell_{h-1,p})b_{j_{h,p,q}}\bigg)\\
&=\frac{1}{n^{2ML(k+1)}}\sum_{\ell\in\mathcal{L}} z_\ell \sum_{j\in\mathcal{J}} e_N\bigg(\sum_{p=1}^{2M}\sum_{h=1}^{k+1}\sum_{q=1}^L(-1)^p(\ell_{h,p}-\ell_{h-1,p})b_{j_{h,p,q}}\bigg)\\
&=\frac{1}{n^{2ML(k+1)}}\sum_{\ell\in\mathcal{L}} z_\ell \sum_{j\in\mathcal{J}} e_N\bigg(\sum_{i\in[n]}\sum_{(h,p,q)\in j^{-1}(i)}(-1)^p(\ell_{h,p}-\ell_{h-1,p})b_{i}\bigg)\\
&=\frac{1}{n^{2ML(k+1)}}\sum_{\ell\in\mathcal{L}} z_\ell \sum_{j\in\mathcal{J}} \prod_{i\in[n]}e_N\bigg(\sum_{(h,p,q)\in j^{-1}(i)}(-1)^p(\ell_{h,p}-\ell_{h-1,p})b_{i}\bigg).
\end{align*}
When bounding the expectation, we can remove the $z_\ell$'s.
To be explicit, we apply the linearity of expectation and the independence of the $b_i$'s, the triangle inequality and the fact that $\|z\|_\infty\leq 1$, and then the fact that each expectation $\mathbb{E}e_N$ is either $0$ or $1$:
\[
\mathbb{E}\sum_\ell z_\ell \sum_j\prod_i e_N
=\sum_\ell z_\ell \sum_j\prod_i \mathbb{E} e_N
\leq \sum_\ell \bigg|\sum_j\prod_i \mathbb{E} e_N\bigg|
= \sum_\ell \sum_j\prod_i \mathbb{E} e_N.
\]
This produces the bound
\begin{align}
\nonumber
&\mathbb{E}|a_u^*A_T(I-A_T^*A_T)^k z|^{2M}\\
\nonumber
&\qquad\leq \frac{1}{n^{2ML(k+1)}}\sum_{\ell\in\mathcal{L}} \sum_{j\in\mathcal{J}} \prod_{i\in[n]}\mathbb{E}e_N\bigg(\sum_{(h,p,q)\in j^{-1}(i)}(-1)^p(\ell_{h,p}-\ell_{h-1,p})b_{i}\bigg)\\
\label{eq.bound on expectation}
&\qquad=\frac{1}{n^{2ML(k+1)}}\bigg|\bigg\{(\ell,j)\in\mathcal{L}\times\mathcal{J}:\sum_{(h,p,q)\in j^{-1}(i)}(-1)^p(\ell_{h,p}-\ell_{h-1,p})=0~~\forall i\in[n]\bigg\}\bigg|.
\end{align}
We estimate the size of this set $S$ by first bounding the number $N_1(j)$ of $\ell\in\mathcal{L}$ such that $(\ell,j)\in S$, and then bounding the sum $|S|=\sum_{j\in\mathcal{J}}N_1(j)$.
Let $m(j)$ denote the size of the image of $j\in\mathcal{J}$ and put $r:=\lceil m(j)/L\rceil$.
By Lemma~\ref{lem.linear independence}, there exist linearly independent $\{w_i\}_{i\in[r]}$ in $\mathbb{R}^{(k+1)\times 2M}$ and $\{b_i\}_{i\in[r]}$ in $\mathbb{R}$ such that $(\ell,j)\in S$ only if $\ell\in\mathbb{R}^{(k+1)\times 2M}$ satisfies $\langle \ell,w_i\rangle=b_i$ for every $i\in[r]$.
Of course, $(\ell,j)\in S$ also requires $\ell\in T^{(k+1)\times 2M}$.
As such, we use identity basis elements to complete $\{w_i\}_{i\in[r]}$ to a basis $\{w_i\}_{i\in[2M(k+1)]}$ for $\mathbb{R}^{(k+1)\times 2M}$.
Then $(\ell,j)\in S$ only if $\ell\in\mathbb{R}^{(k+1)\times 2M}$ satisfies 
\[
\langle \ell,w_i\rangle=b_i
~~~
\forall i\in\{1,\ldots,r\}
\qquad
\text{and}
\qquad
\langle \ell,w_i\rangle\in T
~~~
\forall i\in\{r+1,\ldots,2M(k+1)\}.
\]
It follows that $N_1(j)\leq |T|^{2M(k+1)-r}$.
Next, observe that $\eta$ is the size of the domain of $j\in\mathcal{J}$.
If $m(j)>\eta/2$, then there exists $i\in[n]$ for which the preimage $j^{-1}(i)$ is a singleton set $\{(h,p,q)\}$, in which case there is no $(\ell,j)\in S$ since $\ell_{h,p}\neq\ell_{h-1,p}$.
Overall,
\[
N_1(j)
\leq
\left\{
\begin{array}{cl}
s^{2M(k+1)-\lceil m(j)/L\rceil}&\text{if } m(j)\leq \eta/2\\
0&\text{else.}
\end{array}
\right.
\]
Next, let $N_2(m)$ denote the number of $j\in\mathcal{J}$ with image of size $m$.
Then $N_2(m)\leq n^m m^\eta$, since there are $\binom{n}{m}\leq n^m$ choices for the image, and for each image, there are at most $m^\eta$ choices for $j$ (we say ``at most'' here since the image of $j$ needs to have size $m$).
Then
\begin{align}
|S|
= \sum_{j\in\mathcal{J}}N_1(j)
\nonumber
&\leq \sum_{m=1}^{\eta/2} N_2(m) s^{2M(k+1)-\lceil m/L\rceil}\\
\nonumber
&\leq \sum_{m=1}^{\eta/2} n^m m^\eta (s^{1/L})^{\eta-m}\\
\label{eq.bound on S}
&\leq (\eta/2)^{\eta}(s^{1/L})^{\eta} \sum_{m=1}^{\eta/2} (ns^{-1/L})^m
\leq (\eta/2)^{\eta}(s^{1/L})^{\eta} \cdot 2(ns^{-1/L})^{\eta/2},
\end{align}
where the last step uses the fact that $x:=ns^{-1/L}\geq 2$, or equivalently $x-1\geq x/2$, which implies $\sum_{i=1}^k x^i=\frac{x^{k+1}-x}{x-1}\leq 2(x^k-1)\leq 2x^k$.
We now combine \eqref{eq.bound on expectation} and \eqref{eq.bound on S}:
\begin{equation}
\label{eq.bound for integers}
\mathbb{E}|a_u^*A_T(I-A_T^*A_T)^k z|^{2M}
\leq n^{-\eta}\cdot (\eta/2)^{\eta}(s^{1/L})^{\eta} \cdot 2(ns^{-1/L})^{\eta/2}
=2(\eta/2)^\eta(s^{1/L}n^{-1})^{\eta/2}.
\end{equation}
Finally, we interpolate using Littlewood's inequality.
Put $X:=a_u^*A_T(I-A_T^*A_T)^k z$, given any $p\geq 2$, let $M$ denote the largest integer for which $2M\leq p$, and put $\theta:=p/2-M$.
Consider the function defined by $\eta(x):=(k+1)Lx$, and put $\eta:=\eta(p)$, $\eta_1:=\eta(2M)$, and $\eta_2:=\eta(2M+2)$.
Then \eqref{eq.bound for integers} implies
\begin{align*}
\mathbb{E}|X|^p
&\leq(\mathbb{E}|X|^{2M})^{1-\theta}(\mathbb{E}|X|^{2M+2})^{\theta}\\
&\leq 2  ((\eta_1/2)^{\eta_1})^{1-\theta}((\eta_2/2)^{\eta_2})^{\theta} (s^{1/L}n^{-1})^{\eta/2}
\leq 2 \eta^{2\eta}(s^{1/L}n^{-1})^{\eta/2},
\end{align*}
where the last step applies the fact that $\eta_1,\eta_2\leq 2\eta$.
\end{proof}

\begin{proof}[Proof of Lemma~\ref{lem.linear independence}]
First, we isolate the constant terms in the left-hand side:
\[
\sum_{(h,p,q)\in j^{-1}(i)}(-1)^p(\ell_{h,p}-\ell_{h-1,p})
=\sum_{(1,p,q)\in j^{-1}(i)}(-1)^p(\ell_{1,p}-\ell_{0,p})
+\sum_{\substack{(h,p,q)\in j^{-1}(i)\\h>1}}(-1)^p(\ell_{h,p}-\ell_{h-1,p}).
\]
Since $\ell_{0,p}=u$ for every $p$, we have
\begin{equation}
\label{eq.linear constraint}
\sum_{(1,p,q)\in j^{-1}(i)}(-1)^p\ell_{1,p}+
\sum_{\substack{(h,p,q)\in j^{-1}(i)\\h>1}}(-1)^p(\ell_{h,p}-\ell_{h-1,p})
=\bigg(\sum_{(1,p,q)\in j^{-1}(i)}(-1)^{p+1}\bigg)u.
\end{equation}
Let $e_{h,p}\in\mathbb{R}^{(k+1)\times 2M}$ denote the matrix that is $1$ at entry $(h,p)$ and $0$ otherwise, and consider the basis $\{b_{h,p}\}_{h\in[k+1],p\in[2M]}$ of $\mathbb{R}^{(k+1)\times[2M]}$ defined by
\[
b_{h,p}:=\left\{\begin{array}{cl}
(-1)^p e_{1,p}&\text{if } h=1\\
(-1)^p(e_{h,p}-e_{h-1,p})&\text{else.}
\end{array}\right.
\]
Then the left-hand side of \eqref{eq.linear constraint} may be rewritten as $\langle \ell,\sum_{(h,p,q)\in j^{-1}(i)}b_{h,p}\rangle$.
It remains to find a subset $S\subseteq[n]$ of size $\lceil m/L\rceil$ for which $\{\sum_{(h,p,q)\in j^{-1}(i)}b_{h,p}\}_{i\in S}$ is linearly independent.
Initialize $S=\emptyset$, $R:=\operatorname{im}(j)$, and $t=1$, and then do the following until $R$ is empty:
\begin{itemize}
\item
select any $(h_t,p_t)$ for which there exists $q_t$ such that $i_t:=j(h_t,p_t,q_t)\in R$, and
\item
update $S\leftarrow S\cup\{i_t\}$, $R\leftarrow R\setminus\{j(h_t,p_t,q):q\in[L]\}$ and $t\leftarrow t+1$.
\end{itemize}
Since each iteration removes at most $L$ members from $R$, the resulting $S$ has size at least $\lceil |\operatorname{im}(j)|/L\rceil=\lceil m/L\rceil$.
By construction, every $t$ has the property that there is no $q\in[L]$ or $u<t$ for which $(h_t,p_t,q)\in j^{-1}(i_u)$, and it follows that $\sum_{(h,p,q)\in j^{-1}(i_t)}b_{h,p}$ is the first member of the sequence to exhibit a contribution from $b_{h_t,p_t}$.
Thanks to this triangularization, we may conclude that $\{\sum_{(h,p,q)\in j^{-1}(i)}b_{h,p}\}_{i\in S}$ is linearly independent.
\end{proof}

\section*{Acknowledgments}

The authors thank Holger Rauhut for pointing out~\cite{HugelRS:14} as a potential avenue for derandomized compressed sensing.
DGM was partially supported by AFOSR FA9550-18-1-0107 and NSF DMS 1829955.


\begin{thebibliography}{WW}

\bibitem{AmelunxenLMT:14}
D.\ Amelunxen, M.\ Lotz, M.\ B.\ McCoy, J.\ A.\ Tropp,
Living on the edge:\ Phase transitions in convex programs with random data,
Inform.\ Inference 3 (2014) 224--294.

\bibitem{ApplebaumHSC:09}
L.\ Applebaum, S.\ D.\ Howard, S.\ Searle, R.\ Calderbank,
Chirp sensing codes:\ Deterministic compressed sensing measurements for fast recovery,
Appl.\ Comput.\ Harmon.\ Anal.\ 26 (2009) 283--290.

\bibitem{BandeiraDMS:13}
A.\ S.\ Bandeira, E.\ Dobriban, D.\ G.\ Mixon, W.\ F.\ Sawin,
Certifying the restricted isometry property is hard,
IEEE Trans.\ Inform.\ Theory 59 (2013) 3448--3450.

\bibitem{BandeiraFMM:16}
A.\ S.\ Bandeira, M.\ Fickus, D.\ G.\ Mixon, J.\ Moreira,
Derandomizing restricted isometries via the Legendre symbol,
Constr.\ Approx.\ 43 (2016) 409--424.

\bibitem{BandeiraFMW:13}
A.\ S.\ Bandeira, M.\ Fickus, D.\ G.\ Mixon, P.\ Wong,
The road to deterministic matrices with the restricted isometry property,
J.\ Fourier Anal.\ Appl.\ 19 (2013) 1123--1149.

\bibitem{BandeiraMM:17}
A.\ S.\ Bandeira, D.\ G.\ Mixon, J.\ Moreira,
A conditional construction of restricted isometries,
Int.\ Math.\ Res.\ Not.\ (2017) 372--381.

\bibitem{Bourgain:14}
J.\ Bourgain,
An improved estimate in the restricted isometry problem,
In:\ Geometric Aspects of Functional Analysis, Springer, 2014, pp.\ 65--70.

\bibitem{BaraniukDDW:08}
R.\ Baraniuk, M.\ Davenport, R.\ DeVore, M.\ Wakin,
A simple proof of the restricted isometry property for random matrices,
Constr.\ Approx.\ 28 (2008) 253--263.

\bibitem{BourgainDFKK:11}
J.\ Bourgain, S.\ Dilworth, K.\ Ford, S.\ Konyagin, D.\ Kutzarova,
Explicit constructions of RIP matrices and related problems,
Duke Math.\ J.\ 159 (2011) 145--185.

\bibitem{BourgainDFKK:11b}
J.\ Bourgain, S.\ Dilworth, K.\ Ford, S.\ Konyagin, D.\ Kutzarova,
Breaking the $k^2$ barrier for explicit RIP matrices,
STOC 2011, 637--644.

\bibitem{CaiZ:13}
T.\ T.\ Cai, A.\ Zhang,
Sharp RIP bound for sparse signal and low-rank matrix recovery,
Appl.\ Comput.\ Harmon.\ Anal.\ 35 (2013) 74--93.

\bibitem{Candes:08}
E.\ J.\ Cand\`{e}s,
The restricted isometry property and its implications for compressed sensing,
C.\ R.\ Acad.\ Sci.\ Paris, Ser.\ I 346 (2008) 589--592.

\bibitem{CandesRT:06}
E.\ J.\ Cand\`{e}s, J.\ Romberg, T.\ Tao,
Robust uncertainty principles:\ Exact signal reconstruction from highly incomplete frequency information,
IEEE Trans.\ Inform.\ Theory 52 (2006) 489--509.

\bibitem{CandesT:06}
E.\ J.\ Cand\`{e}s, T.\ Tao,
Near-optimal signal recovery from random projections:\ Universal encoding strategies?,
IEEE Trans.\ Inform.\ Theory 52 (2006) 5406--5425.

\bibitem{CheraghchiGV:13}
M.\ Cheraghchi, V.\ Guruswami, A.\ Velingker,
Restricted isometry of Fourier matrices and list decodability of random linear codes,
SIAM J.\ Comput.\ 42 (2013) 1888--1914.

\bibitem{DelapenaG:12}
V.\ de la Pe\~{n}a, E.\ Gin\'{e},
Decoupling:\ From Dependence to Independence,
Springer, 2012.

\bibitem{DeVore:07}
R.\ A.\ DeVore,
Deterministic constructions of compressed sensing matrices,
J.\ Complexity 23 (2007) 918--925.

\bibitem{Donoho:06}
D.\ L.\ Donoho,
Compressed sensing,
IEEE Trans.\ Inform.\ Theory 52 (2006) 1289--1306.

\bibitem{DonohoT:09}
D.\ Donoho, J.\ Tanner,
Observed universality of phase transitions in high-dimensional geometry, with implications for modern data analysis and signal processing,
Philos.\ Trans.\ R.\ Soc.\ A 367 (2009) 4273--4293.

\bibitem{DuarteDTLSKB:08}
M.\ F.\ Duarte, M.\ A.\ Davenport, D.\ Takhar, J.\ N.\ Laska, T.\ Sun, K.\ F.\ Kelly, R.\ G.\ Baraniuk,
Single-pixel imaging via compressive sampling,
IEEE Signal Process.\ Mag.\ 25 (2008) 83--91.

\bibitem{FickusM:15}
M.\ Fickus, D.\ G.\ Mixon,
Tables of the existence of equiangular tight frames,
arXiv:1504.00253

\bibitem{FickusMT:12}
M.\ Fickus, D.\ G.\ Mixon, J.\ C.\ Tremain,
Steiner equiangular tight frames,
Linear Algebra Appl.\ 436 (2012) 1014--1027.

\bibitem{FoucartR:13}
S.\ Foucart, H.\ Rauhut,
A Mathematical Introduction to Compressive Sensing,
Springer, 2013.

\bibitem{Goldreich:01}
O.\ Goldreich,
Foundations of cryptography I:\ Basic Tools,
Cambridge University Press, 2001.

\bibitem{HaikinZG:17}
M.\ Haikin, R.\ Zamir, M.\ Gavish,
Random subsets of structured deterministic frames have MANOVA spectra,
Proc.\ Natl.\ Acad.\ Sci.\ U.S.A.\ 114 (2017) E5024--E5033.

\bibitem{HavivR:17}
I.\ Haviv, O.\ Regev,
The restricted isometry property of subsampled Fourier matrices,
In:\ Geometric Aspects of Functional Analysis, Springer, 2017, pp.\ 163--179.

\bibitem{HugelRS:14}
M.\ H\"{u}gel, H.\ Rauhut, T.\ Strohmer,
Remote sensing via $\ell_1$-minimization,
Found.\ Comput.\ Math.\ 14 (2014) 115--150.

\bibitem{Hoeffding:63}
W.\ Hoeffding,
Probability Inequalities for Sums of Bounded Random Variables,
J.\ Amer.\ Statist.\ Assoc.\ 58 (1963) 13--30.

\bibitem{Iwen:14}
M.\ A.\ Iwen,
Compressed sensing with sparse binary matrices:\ Instance optimal error guarantees in near-optimal time,
J.\ Complexity 30 (2014) 1--15.

\bibitem{JasperMF:13}
J.\ Jasper, D.\ G.\ Mixon, M.\ Fickus,
Kirkman equiangular tight frames and codes,
IEEE Trans.\ Inform.\ Theory 60 (2013) 170--181.

\bibitem{JohnsonL:86}
W.\ B.\ Johnson, J.\ Lindenstrauss,
Extensions of Lipschitz mappings into a Hilbert space,
Contemp.\ Math.\ 26 (1984) 189--206.

\bibitem{KrahmerMR:14}
F.\ Krahmer, S.\ Mendelson, H.\ Rauhut,
Suprema of chaos processes and the restricted isometry property,
Comm.\ Pure Appl.\ Math.\ 67 (2014) 1877--1904.

\bibitem{MagsinoMP:19}
M.\ Magsino, D.\ G.\ Mixon, H.\ Parshall,
Kesten--McKay law for random subensembles of Paley equiangular tight frames,
arXiv:1905.04360

\bibitem{MendelsonPT:09}
S.\ Mendelson, A.\ Pajor, N.\ Tomczak-Jaegermann,
Uniform uncertainty principle for Bernoulli and subgaussian ensembles,
Constr.\ Approx.\ 28 (2009) 277--289.

\bibitem{Mixon:15}
D.\ G.\ Mixon,
Explicit matrices with the restricted isometry property:\ Breaking the square-root bottleneck.
In:\ Compressed sensing and its applications, Birkh\"{a}user, 2015, pp.\ 389--417.

\bibitem{MonajemiJGD:13}
H.\ Monajemi, S.\ Jafarpour, M.\ Gavish, D.\ L.\ Donoho, Stat 330/CME 362 Collaboration,
Deterministic matrices matching the compressed sensing phase transitions of Gaussian random matrices,
Proc.\ Natl.\ Acad.\ Sci.\ U.S.A.\ 110 (2013) 1181--1186.

\bibitem{Renes:07}
J.\ M.\ Renes,
Equiangular tight frames from Paley tournaments,
Linear Algebra Appl.\ 426 (2007) 497--501.

\bibitem{RudelsonV:08}
M.\ Rudelson, R.\ Vershynin,
On sparse reconstruction from Fourier and Gaussian measurements,
Comm.\ Pure Appl.\ Math.\ 61 (2008) 1025--1045.

\bibitem{Tao:07}
T.\ Tao,
Open question:\ deterministic UUP matrices,
\url{https://terrytao.wordpress.com/2007/07/02/open-question-deterministic-uup-matrices/}

\bibitem{Tao:12}
T.\ Tao,
Topics in random matrix theory,
Amer.\ Math.\ Soc., 2012.

\bibitem{TaubmanM:12}
D.\ Taubman, M.\ Marcellin,
JPEG2000 image compression fundamentals, standards and practice,
Springer, 2012.

\bibitem{TillmannP:13}
A.\ M.\ Tillmann, M.\ E.\ Pfetsch,
The computational complexity of the restricted isometry property, the nullspace property, and related concepts in compressed sensing,
IEEE Trans.\ Inform.\ Theory 60 (2013) 1248--1259.

\bibitem{Tropp:15}
J.\ A.\ Tropp,
An introduction to matrix concentration inequalities,
Found.\ Trends Mach.\ Learn.\ 8 (2015) 1--230.

\bibitem{Tropp:08}
J.\ A.\ Tropp,
On the conditioning of random subdictionaries,
Appl.\ Comput.\ Harmon.\ Anal.\ 25 (2008) 1--24.

\bibitem{Tropp:12}
J.\ A.\ Tropp,
User-friendly tail bounds for sums of random matrices,
Found.\ Comput.\ Math.\ 12 (2012) 389--434.

\bibitem{Vershynin:18}
R.\ Vershynin,
High-dimensional probability:\ An introduction with applications in data science,
Cambridge University Press, 2018.

\bibitem{WangBP:16}
T.\ Wang, Q.\ Berthet, Y.\ Plan,
Average-case hardness of RIP certification,
NeurIPS (2016) 3819--3827.

\end{thebibliography}
\end{document}